\title{Improved Approximations for Translational Packing of Convex Polygons} 
\author{Adam Kurpisz}{Bern University of Applied Sciences and Department of Mathematics, ETH Zurich, Switzerland \and \url{https://n.ethz.ch/~kurpisza/} }{adam.kurpisz@ifor.math.ethz.ch}{https://orcid.org/0000-0002-2320-4482}{}
\author{Silvan Suter}{Department of Mathematics, ETH Zurich, Switzerland}{sili.suter@bluewin.ch}{[orcid]}{}
\authorrunning{A. Kurpisz and S. Suter} 
\keywords{Approximation algorithms, Packing problems, Convex polygons, Bin packing, Strip packing, Area minimization} 
\newtheorem{problem}[theorem]{Problem}
\DeclareMathOperator{\dist}{dist}
\DeclareMathOperator{\opt}{opt}
\DeclareMathOperator{\argmin}{argmin}
\DeclareMathOperator{\argmax}{argmax}
\DeclareMathOperator{\Rectim}{\hspace{0.05 cm} \includegraphics[scale=0.02]{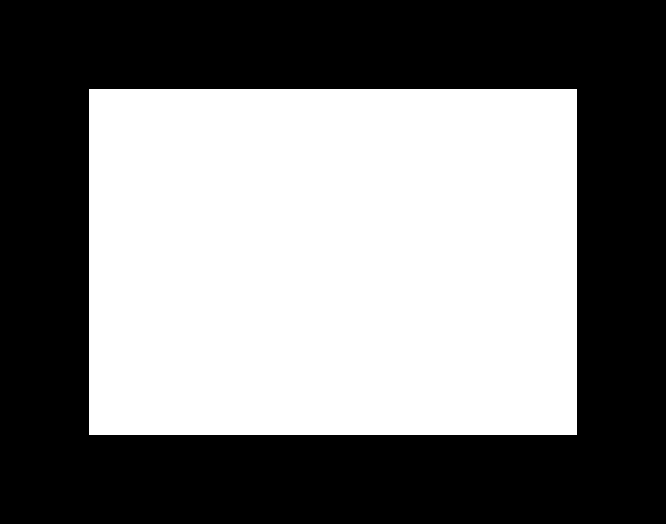}}
\DeclareMathOperator{\rectangles}{\mathcal{I}_{\Rectim}}
\DeclareMathOperator{\Polyim}{\hspace{0.02 cm} \includegraphics[scale=0.03]{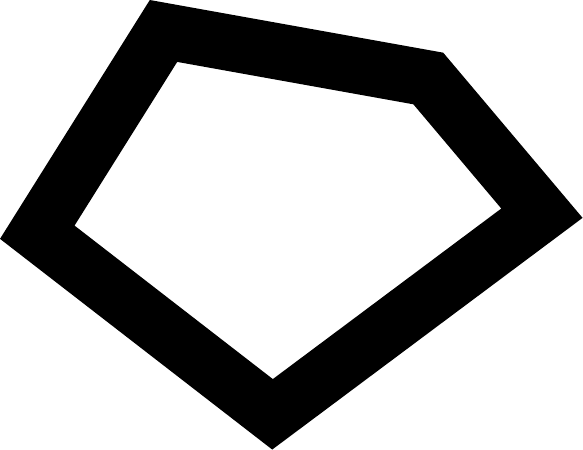}}
\DeclareMathOperator{\polygons}{\mathcal{I}_{\Polyim}}
\DeclareMathOperator{\area}{\mathtt{area}}
\DeclareMathOperator{\width}{\mathtt{width}}
\DeclareMathOperator{\height}{\mathtt{height}}
\DeclareMathOperator{\base}{\mathtt{base}}
\DeclareMathOperator{\wside}{\mathtt{wside}}
\DeclareMathOperator{\R}{\mathbb{R}}
\DeclareMathOperator{\N}{\mathbb{N}}
\DeclareMathOperator{\Q}{\mathbb{Q}}
\begin{document}

\maketitle

\begin{abstract}
Optimal packing of objects in containers is a critical problem in various real-life and industrial applications. This paper investigates the two-dimensional packing of convex polygons without rotations, where only translations are allowed. We study different settings depending on the type of containers used, including minimizing the number of containers or the size of the container based on an objective function.

Building on prior research in the field, we develop polynomial-time algorithms with improved approximation guarantees upon the best-known results by Alt, de Berg and Knauer, as well as Aamand, Abrahamsen, Beretta and Kleist, for problems such as Polygon Area Minimization, Polygon Perimeter Minimization, Polygon Strip Packing, and Polygon Bin Packing. Our approach utilizes a sequence of object transformations that allows sorting by height and orientation, thus enhancing the effectiveness of shelf packing algorithms for polygon packing problems. In addition, we present efficient approximation algorithms for special cases of the Polygon Bin Packing problem, progressing toward solving an open question concerning an $\mathcal{O}(1)$-approximation algorithm for arbitrary polygons.
\end{abstract}

\newpage
\section{Introduction}



Many real-life situations require us to make decisions about optimally packing a collection of objects into a specific container. One particular category of these packing problems is two-dimensional packing, which is encountered in everyday scenarios like arranging items on a shelf and in industrial applications such as cutting cookies from rolled-out dough or manufacturing sets of tiles from standard-sized panels made of wood, glass, or metal. Another intriguing example involves cutting fabric pieces for clothing production. In this case, the pieces often cannot be rotated freely, as they must adhere to a desired pattern in the final product, which is tailored of multiple elements.
The widespread applicability of two-dimensional packing problems has led to a surge of interest in designing efficient algorithms to address them. In this paper, we follow the line of research and study the problem of packing convex polygons without rotations in various settings depending on the type of containers used. 

Past research focusing on theoretical considerations of two-dimensional packings mainly concentrates on the scenario when all objects are axis-parallel rectangles. In this paper, we will discuss packing without rotations, in which only translations are permitted.
There are two main classes of the problem depending on whether the size of the container is fixed and we want to minimize the number of containers used or whether we want to minimize the container's size with respect to some objective function. 

A seminal example of the first class is the \textsc{Geometric Bin Packing} problem in which a number of unit size squared bins to pack is to be minimized. The problem is arguably the most natural generalization of the regular $(1D)$ \textsc{Bin Packing} to two dimensions, and its absolute approximability has been fully understood. Unless $\mathcal{P} = \mathcal{NP}$, the best possible efficient constant factor approximation is $2$ \cite{leung}, and such an algorithm is known \cite{harren_vanstee}.

In the second class, there are several variants to be considered. An example is the \textsc{Strip Packing Problem} which is concerned with packing objects into a strip of width $1$ and infinite height in such a way that the maximum of all the heights of the placed objects is minimized. Like \textsc{Geometric Bin Packing}, \textsc{Strip Packing} generalizes $(1D)$ \textsc{Bin Packing}. The best known efficient constant factor approximation for \textsc{Strip Packing} has approximation factor $(5/3+ \epsilon)$~~\cite{harren}. It is known that there can not exist a polynomial time algorithm with an approximation ratio of $(3/2-\epsilon)$ for any $\epsilon > 0$ unless $\mathcal{P} = \mathcal{NP}$, which follows directly from the approximation hardness of $(1D)$ \textsc{Bin Packing}. Both classes of problems have been also considered in the asymptotic setting, see e.g.~\cite{hoberg, khanbansal,JANSEN_strip}.

In younger time, there was also an increase of interest in cases where the objects in question are general convex polygons.
Alt, de Berg and Knauer \cite{alt_17,alt_17_corrigendum} considered the problem of packing an instance consisting of a number of convex polygons of the form $p \subset [0,1]^2$ into a minimum area axis-parallel rectangular container. We refer to this problem as \textsc{Polygon Area Minimization} throughout this paper. In the special case where the instance consists of rectangles only, the problem is known to admit a PTAS~\cite{bansal_06}. They proved the existence of the following efficient algorithm:
\begin{itemize}
    \item A $23.78$-approximation for \textsc{Polygon Area Minimization}.
\end{itemize}
Recently, Aamand, Abrahamsen, Beretta and Kleist \cite{aamand_23} showed that the algorithm of Alt, de Berg and Knauer can be leveraged to obtain also efficient approximation algorithms of further problems:
\begin{itemize}
    \item A $7$-approximation for \textsc{Polygon Perimeter Minimization}.
    \item A $51$-approximation for \textsc{Polygon Strip Packing}.
    \item An $11$-approximation for \textsc{Polygon Bin Packing} for polygons with diameter at most $\frac{1}{10}$.
\end{itemize}

\subsection{Our results}
The results of Alt, de Berg and Knauer \cite{alt_17,alt_17_corrigendum} and Aamand et al. \cite{aamand_23} are heavily based on so-called shelf packing algorithms. In shelf packing algorithms, the objects are first placed on the shelves, possibly ordered by height, which are later stacked on one another to build a final solution. Compared to axis-parallel rectangles, the main challenge in designing an approximation algorithm for polygon packing problems is that objects cannot be sorted by height and orientation simultaneously. As a result, the algorithm and its analysis in~\cite{alt_17,alt_17_corrigendum} have such a large approximation guarantee. In this paper, we provide new insight into how shelf packing algorithms should be applied to polygon packing problems. We introduce a sequence of transformations of the objects that allow us first to sort them by height and later by orientation to build a solution of a much better approximation guarantee. More precisely, we design polynomial-time algorithms with the following factors:
\begin{itemize}
    \item A $9.45$-approximation for \textsc{Polygon Area Minimization}.
\end{itemize}
Using this algorithm as a subroutine, we build upon the methods from Aamand et al. \cite{aamand_23} to obtain the following efficient approximation algorithms:
\begin{itemize}
    \item A $(3.75+\epsilon)$-approximation for \textsc{Polygon Perimeter Minimization}.
    \item A $21.89$-approximation for \textsc{Polygon Strip Packing}.
    \item A $5.09$-approximation for \textsc{Polygon Bin Packing} for polygons which have their diameter bounded by $\frac{1}{10}$.
\end{itemize}

The results are proved in Sections~\ref{section_polygon_area_min},~\ref{section_polygon_perimeter_min},~\ref{section_polygon_strip}, and~\ref{section_polygon_1/M} respectively.
Furthermore, in Section \ref{section_polygon_bin_pack} we show the following results, which make progress towards solving an open question of a $\mathcal{O}(1)$-approximation algorithm for \textsc{Polygon Bin Packing} for arbitrary polygons.
\begin{itemize}
    \item There is an efficient $\mathcal{O}(\frac{1}{\delta})$-approximation algorithm for \textsc{Polygon Bin Packing} for instances where each polygon has width or height at most $1-\delta$.
    \item There is an efficient $\mathcal{O}(1)$-approximation algorithm for \textsc{Polygon Bin Packing} for instances with the property that all polygons share a spine (up to translation) with height at least $\frac{3}{4}$.
\end{itemize}

\section{Preliminaries} 
We start out considerations by recalling a classical and well-known problem in theoretical computer science, the \textsc{Bin Packing} problem: Given a list of numbers $s_1, \dots, s_n \in (0,1] \cap \Q$, representing the sizes of $n$ objects, the goal is to find the minimum number of bins of size $1$, so that we can pack all objects into them. 
\textsc{Bin Packing} can be seen as the task of packing ($1$-dimensional) intervals into as few intervals of length $1$ as possible. This definition can be extended to the two dimensional case. To do so we introduce several definitions. 

Throughout the paper, for a set subset $A$ of the domain of a function $f$, $f(A)$ is a shorthand notation for $\sum_{a \in A} f(a)$.
A \textbf{packing instance} is defined by a finite set $I$ containing objects to pack, a countable set $\mathcal{R}$ containing bins to pack the objects into and a set $\Phi$ of allowed transformations.
A \textbf{shape} is a compact connected set $s \subset \R^2_{\geq 0}$. 
We denote a rectangular shape  $r \subset \R^2_{\geq 0}$ by a tuple $r=(w,h) \in \R^2_{>0}$, which has width $w$ and height $h$. 
An \textbf{object} $o$ is an element of $I$ and has a shape $s \subset [0,1]^2$.
Furthermore, we define $\rectangles$ and $\polygons$ to be the sets which contain all finite sets $I$ with objects consisting of axis-parallel rectangles and convex polygons, respectively. We will usually denote $\lvert I \rvert$ by $n$. In order to ensure computability, we assume that each object in $\rectangles$ and $\polygons$ is defined by finitely many vertices.

A \textbf{bin} $R \in \mathcal{R}$ is also characterized by having a shape. In this paper, we always assume that the shape of a bin $R$ is an axis-parallel rectangle. That is, it is a rectangle with each of its sides being parallel to one of the primal axes in $\R^2$, and normally, its lower left corner is at the origin $(0,0) \in \R^2$. 
We write $\width(R)$ and $\height(R)$ for the width and the height of a bin $R \in \mathcal{R}$. If $R=[0,1]^2$, we say that $R$ is a \textbf{unit bin}.
In this study, the set of \textbf{allowed transformations} $\Phi$ is the set of all translations, i.e. $\Phi = \{\phi: R^2 \to \R^2 \, | \, \exists x_0 \in \R^2 \, \forall x \in \R^2: \phi(x) = x + x_0 \}$. Note that we consider the setting where rotations or reflections of objects are not allowed.
We denote the width and the height of an object $o$ (the length of the projection on the $x$-axis or $y$-axis respectively) by $\width(o)$ and $\height(o)$. The maximum width and height of a shape of an object in $I$, we denote by $w_{\max}(I)$ and $h_{\max}(I)$ respectively. We will also use the notation $\width(S)$ and $\height(S)$ for arbitrary subsets $S \subset \R^2$

A \textbf{packing} $P$ of $I$ is defined as a set of pairwise disjoint placements of all $o \in I$ with respect to the set of translations $\Phi$. 
Given a bin $R \in \mathcal{R}$, we say that we can \textbf{pack $I$ into $R$} if there is a packing $P$ of $I$ so that $P \subset R$. In this case, we also say that $P$ is a \textbf{packing of $I$ into $R$}. If we can pack the objects $I$ into the bin $R$, we call $R$ a \textbf{bounding box} of $I$. 
If we have an at most countable collection of bins $\mathcal{R}:= \{R_j\}_{j \in J}$, we say that we can \textbf{pack $I$ into $\mathcal{R}$}, if there is a partition $I=\bigsqcup_{j \in J} I_j$ so that we can pack the objects $I_j$ into $R_j$ for all $j \in J$. In such case, if for all $j \in J$, $P_j$ is a packing of $I_j$ into $R_j$, we refer to $\mathcal{P}:=\{P_j\}_j$ as a \textbf{multi-packing}. For any $j \in J$, we say that \textbf{$o$ is in the packing $P_j$}, if $o \in I_j$.
We define the $\area(P)$ to be the area of the smallest axis-parallel rectangle containing $P$.

The definition already demonstrates the more difficult nature of multi-dimensional packings compared to one dimensional packings. Even if the objects to pack are axis-parallel rectangles, it is no longer sufficient to consider in which bin to pack which rectangle, but the exact position in the bin also matters.

In this paper we consider problems consisting in packing convex polygonal shapes into axis-parallel rectangular bins under translational transformations. More precisely we consider the following problems.

\begin{problem}[\textsc{Polygon Packing}]\ \\
\begin{tabularx}{\linewidth}{l X}
\textbf{Input:} &Convex polygons $I \in \polygons$. \\
\textbf{Goals:} & 
\end{tabularx}
\begin{itemize}
\item \textsc{Bin Packing:} Find the minimum number $B \in \N$ so that we can pack $I$ into $B$ unit bins.
\item \textsc{Strip Packing:} Find the minimum height $H \in \Q_{>0}$ so that we can pack $I$ into a bin of width $1$ and height $H$.
\item \textsc{Area Minimization:} Find a bounding box $R \in \R^2_{>0}$ of $I$ so that $f(R)=\width(R)\cdot \height(R)$ is minimal.
\item \textsc{Perimeter Minimization:} Find a bounding box $R \in \R^2_{>0}$ of $I$ so that $f(R)=2(\width(R) + \height(R))$ is minimal.
\item \textsc{Minimum Square:} Find a bounding box $R \in \R^2_{>0}$ so that $f(R) = \max\{\width(R),$ $\height(R)\}$ is minimal.
\end{itemize}
Throughout the paper for the problems under consideration, we denote the optimal value of an instance $I \in \polygons$ by $\opt(I)$.

\end{problem}

\section{Shelf Packing Algorithms}
\label{section:shelf_packing}

Introducing well-known shelf-packing algorithms involves basic $(1D)$-\textsc{Bin Packing} algorithms, such as \textsc{NextFit} (NF), \textsc{FirstFit} (FF), and \textsc{BestFit} (BF). These place items $s_1, \dots, s_n$ into bins sequentially, with $s_{i+1}$ placed according to specific rules. If no placement adheres to the rule, a new bin is opened. The rules differ for each algorithm.

\begin{itemize}
    \item NF, places $s_{i+1}$ into the most recently opened bin, if it has enough space.
    \item FF, places $s_{i+1}$ in the earliest opened bin in which it fits.
    \item BF, places $s_{i+1}$ in the bin with least free space among bins in which $s_{i+1}$ fits.
\end{itemize}


NF and FF are often preprocessed by sorting the items in non-increasing size, which are called \textsc{NextFitDecreasing} (NFD) and \textsc{FirstFitDecreasing} (FFD). It is not hard to show that NF is a $2$-approximation for the \textsc{Bin Packing} problem~\cite{johnson_phd}. Moreover, NF packs it into at most $1 + 2 \sum_{i=1}^n s_i$ bins. BF and FF both have an approximation ratio of $1.7$ \cite{dosa_sgall,dosa_sgall_bf}, which are tight. Furthermore, as shown in \cite{johnson_1974}, if $s_i \leq \frac{1}{m}$ for all $i \in [n]$ and some $m \geq 2$, FF packs this instance into at most $1 + (1 + \frac{1}{m} )\sum_{i=1}^n s_i$ bins. \footnote{In fact they show that FF packs such instance in $2 + (1 + \frac{1}{m} )\sum_{i=1}^n s_i$ bins. With a strategy analogous to the one from the proof of Theorem 3 in \cite{coffman_strip} in the two-dimensional case, however, one can show that an additive factor of $1$ is sufficient.}


Variants of NF, FF and BF exist for 2D rectangle packing problems, called \textsc{NextFitDecreasingHeight} (NFDH), \textsc{FirstFitDecreasingHeight} (FFDH), and \textsc{BestFitDecreasingHeight} (BFDH). These shelf-packing algorithms were introduced for \textsc{Strip Packing}, placing rectangles $r_1, \dots, r_n \in \rectangles$ into a bin $R=[0,1]\times[0,\infty)$ with infinite height.
%
%
The three algorithms order rectangles $r_1, \dots, r_n$ in non-increasing height and place them sequentially into shelves in $R$. A shelf is a horizontal strip, and rectangles can open new shelves. A shelf-packing algorithm places $r_{i+1}$ in an existing shelf according to a rule or opens a new shelf. The placement is bottom-left without intersecting other rectangles. 

\begin{itemize}
    \item NFDH places $r_{i+1}$ in the most recently opened shelf, if it fits.
    \item FFDH places $r_{i+1}$ in the lowest possible shelf which has enough space.
    \item BFDH, as FFDH, allows for placing rectangles in a lower shelf than the most recently opened one. Here, $r_{i+1}$ is placed in the one which has minimal free horizontal space at the right end of the shelf among all shelves, while still having at least $\width(r_{i+1})$ of it.
\end{itemize}

%
Since NFDH and FFDH are of particular interest in our paper, we present the following two absolute approximability results for axis-parallel rectangle \textsc{Strip Packing} problem.

\begin{theorem}[Theorem 1~\cite{coffman_strip}]
Let $I \in \rectangles$. The packing $P$ obtained by \normalfont{NFDH} satisfies
\[
\height(P) \leq h_{\max}(I) + 2 \area(I) \leq 3 \opt(I).
\]
\label{nfdh}
\end{theorem}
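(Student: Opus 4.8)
The plan is to work directly with the shelf structure that NFDH builds. Order the shelves from bottom to top as $S_1, \dots, S_k$ and let $h_j$ be the height of $S_j$, that is, the height of the first rectangle placed in it. Since NFDH processes the rectangles in non-increasing order of height, we get $h_1 = h_{\max}(I) \ge h_2 \ge \dots \ge h_k$; more importantly, the first rectangle of $S_{j+1}$ is placed only after every rectangle of $S_j$ has been placed, so in the sorted order it is at most as tall as each of them, which gives that \emph{every} rectangle lying in $S_j$ has height at least $h_{j+1}$. Because $\height(P) = \sum_{j=1}^{k} h_j$ and the first summand already equals $h_{\max}(I)$, it remains to show $\sum_{j=2}^{k} h_j \le 2\,\area(I)$ (the case $k = 1$ being immediate).

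The heart of the argument is a charging scheme. Fix $j \in \{1, \dots, k-1\}$, write $a_j$ for the total area of the rectangles in $S_j$, and let the first rectangle of $S_{j+1}$ have width $w$ and area $b_{j+1}$; since its height is exactly $h_{j+1}$, we have $b_{j+1} = h_{j+1}\, w$. Shelf $S_{j+1}$ was opened precisely because this rectangle did not fit to the right of those already in $S_j$, so the total width of the rectangles in $S_j$ exceeds $1 - w$; as each of them has height at least $h_{j+1}$, this yields $a_j \ge h_{j+1}(1 - w)$. Adding $b_{j+1} = h_{j+1} w$ gives $a_j + b_{j+1} \ge h_{j+1}$.

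Summing over $j = 1, \dots, k-1$ gives $\sum_{j=2}^{k} h_j \le \sum_{j=1}^{k-1} a_j + \sum_{j=2}^{k} b_{j+1}$. The first sum is at most $\area(I)$ because the shelves partition the rectangles, and the second is at most $\area(I)$ because the first rectangles of distinct shelves are distinct rectangles; hence $\sum_{j=2}^{k} h_j \le 2\,\area(I)$ and $\height(P) \le h_{\max}(I) + 2\,\area(I)$. For the final inequality, any packing of $I$ into a width-$1$ strip of height $H$ must have $H \ge h_{\max}(I)$ (rectangles may not be rotated) and $H = H \cdot 1 \ge \area(I)$ (the bounding box contains all the rectangles disjointly), so $\opt(I) \ge \max\{h_{\max}(I), \area(I)\}$ and therefore $h_{\max}(I) + 2\,\area(I) \le 3\,\opt(I)$.

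The only delicate point is the charging step: one must pair the cost $h_{j+1}$ of opening a new shelf both with the area $a_j$ already sitting in the shelf below \emph{and} with the area $b_{j+1}$ of the rectangle that triggered the new shelf, and then observe that these two families of areas land in two separate sums, each bounded by $\area(I)$ — so no rectangle's area is double-charged within a single sum. Everything else is bookkeeping about the non-increasing height order.
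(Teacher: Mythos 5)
Your argument is correct and is essentially the standard charging proof of this classical result, which the paper simply imports from Coffman et al.\ without reproving: every rectangle in shelf $S_j$ has height at least $h_{j+1}$, the widths in $S_j$ plus the width of the rectangle opening $S_{j+1}$ exceed $1$, so $a_j+b_{j+1}\ge h_{j+1}$, and summing double-counts each rectangle at most twice. The only blemish is the index slip in the second sum (it should be $\sum_{j=2}^{k} b_j$, i.e.\ over the first rectangles of shelves $S_2,\dots,S_k$, not $\sum_{j=2}^{k} b_{j+1}$), which does not affect the argument.
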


\begin{theorem}[Theorem 3~\cite{coffman_strip}]
Let $m \in \N$ and $I \in \rectangles$ be satisfying that $w_{\max}(I) \leq \frac{1}{m}$. The packing $P$ of $I$ obtained by \normalfont{FFDH} satisfies
\[
\height(P) \leq h_{\max}(I) + \bigg(1+\frac{1}{m}\bigg) \area(I) \leq \bigg( 2 + \frac{1}{m}\bigg) \opt(I).
\]
\label{thm:FFDH}
\end{theorem}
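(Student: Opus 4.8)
The plan is to follow the shelves that FFDH creates. Because FFDH processes the rectangles in non-increasing order of height, the shelves $S_1,\dots,S_k$ it opens (in this order) have non-increasing heights $H_1\ge H_2\ge\dots\ge H_k$, where $H_j$ is the height of the rectangle that opens $S_j$, and every rectangle later placed in $S_j$ has height at most $H_j$. Since $H_1=h_{\max}(I)$ and the final packing simply stacks the shelves, $\height(P)=\sum_{j=1}^k H_j$, so it suffices to show $\sum_{j=2}^k H_j\le (1+\tfrac1m)\area(I)$ and then add back $H_1=h_{\max}(I)$. The second inequality of the statement is then handled separately, exactly as in Theorem~\ref{nfdh}.

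Two structural facts drive the main inequality. First, at the moment FFDH opens $S_{j+1}$ with a rectangle $r$, the rule forces every earlier shelf $S_i$ with $i\le j$ to have free horizontal space strictly less than $\width(r)\le w_{\max}(I)\le\tfrac1m$, and this remains true afterwards; hence each of $S_1,\dots,S_{k-1}$ has occupied width exceeding $1-\tfrac1m$, and since each rectangle has width at most $\tfrac1m$, each such shelf contains at least $m$ rectangles. Second, at that same moment every rectangle already placed in $S_1,\dots,S_j$ was processed before $r$ and therefore has height at least $H_{j+1}$; in particular the rectangles occupying $S_j$ at that moment already certify $\area(S_j)>(1-\tfrac1m)H_{j+1}$. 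Summing this last bound over $j$ gives only $\sum_{j\ge2}H_j<\tfrac{m}{m-1}\area(I)$, which is weaker than what we want, so the argument must exploit the First-Fit structure globally rather than shelf by shelf.

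The global argument is the heart of the proof, and it is the two-dimensional analogue of Johnson's bound that First Fit packs items of size at most $\tfrac1m$ into at most $1+(1+\tfrac1m)\sum_i s_i$ bins (recalled in the preliminaries): with shelves in the role of bins and widths in the role of item sizes, FFDH restricted to the horizontal coordinate is exactly First Fit, while the sorting by height supplies the control on heights. I would carry this out with a weighting argument: assign to each rectangle $r$ a weight $\beta(r)$ interpolating between its area and a suitable multiple of its width — thin rectangles receiving weight slightly above their area — chosen so that (i) $\sum_r\beta(r)\le(1+\tfrac1m)\area(I)$, and (ii) every non-final shelf $S_j$ collects weight $\beta(S_j)\ge H_{j+1}$; property (ii) uses that whenever a shelf's occupied width is below $\tfrac{m}{m+1}$, First Fit never wastes its free space, so such a shelf must be filled out by later thin rectangles, which is precisely the mechanism that prevents a long run of underfull shelves. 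Then $\sum_{j=2}^k H_j=\sum_{j=1}^{k-1}H_{j+1}\le\sum_{j=1}^{k-1}\beta(S_j)\le\sum_r\beta(r)\le(1+\tfrac1m)\area(I)$. The genuine obstacle is pinning down the weight function and verifying (ii) against every way FFDH may spread rectangles over the shelves; this is the step that truly uses $w_{\max}(I)\le\tfrac1m$ and is the reason the constant improves from the naive $\tfrac{m}{m-1}$ to $1+\tfrac1m$ (the extremal instances being rectangles of width just above $\tfrac1{m+1}$, which FFDH packs $m$ to a shelf of occupied width about $\tfrac{m}{m+1}$).

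Finally, the inequality $h_{\max}(I)+(1+\tfrac1m)\area(I)\le(2+\tfrac1m)\opt(I)$ is routine: any packing of $I$ into a width-$1$ strip of height $\opt(I)$ must contain the tallest rectangle, so $h_{\max}(I)\le\opt(I)$, and the occupied area is at most the strip area, so $\area(I)\le1\cdot\opt(I)$. Combining these with the first inequality yields $\height(P)\le\opt(I)+(1+\tfrac1m)\opt(I)=(2+\tfrac1m)\opt(I)$, completing the proof.
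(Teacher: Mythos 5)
Your reduction is set up correctly: the shelves have non-increasing heights $H_1\ge\dots\ge H_k$, $\height(P)=\sum_j H_j$ with $H_1=h_{\max}(I)$, the second inequality follows routinely from $h_{\max}(I)\le\opt(I)$ and $\area(I)\le\opt(I)$, and you rightly observe that the naive per-shelf bound (occupied width above $1-\frac1m$ at the moment the next shelf opens, all resident rectangles at least $H_{j+1}$ tall) only yields the weaker factor $\frac{m}{m-1}$. But the heart of the argument is missing: you never define the weight function $\beta$ and never verify your property (ii), and you say so yourself ("the genuine obstacle is pinning down the weight function and verifying (ii)"). As written this is a plan, not a proof. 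Note also that the paper does not prove this statement either; it is quoted from Theorem 3 of Coffman, Garey, Johnson and Tarjan, so the burden of that nontrivial argument cannot be discharged by appeal to the surrounding text.

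Moreover, the mechanism you offer to make (ii) plausible is false as stated. From a shelf having occupied width below $\frac{m}{m+1}$, i.e.\ a gap exceeding $\frac{1}{m+1}$, it does not follow that "First Fit never wastes its free space": later rectangles may all have widths in $\bigl(\frac{1}{m+1},\frac1m\bigr]$ and hence not fit, and the final shelves may receive nothing after they are opened. The correct combinatorial fact behind the 1D analogue you invoke is different: among the non-final shelves, at most one can end with occupied width at most $\frac{m}{m+1}$ (two such shelves would force every rectangle of the later one to have width above $\frac{1}{m+1}$, hence fewer than $m$ rectangles and a residual gap of at least $\frac1m$, contradicting the opening of the next shelf). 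Even granting that, (ii) does not follow: if $\beta$ exceeds area only "slightly", with total budget $(1+\frac1m)\area(I)$ and hence pointwise factor at most $\frac{m+1}{m}$, then $\beta(S_j)\ge H_{j+1}$ requires $\area(S_j)\ge\frac{m}{m+1}H_{j+1}$, whereas the height information available certifies only $\area(S_j)>\frac{m-1}{m}H_{j+1}$, since rectangles added to $S_j$ after $S_{j+1}$ opens are shorter than $H_{j+1}$; a wide shelf need not be a heavy shelf. Closing this deficit, and doing so while absorbing only a single additive $h_{\max}(I)$ (which the stated bound $(2+\frac1m)\opt(I)$ requires), is precisely the content of the cited theorem, and your attempt does not yet supply it.
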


\section{An Efficient $9.45$-Approximation for \textsc{Polygon Area Minimization} and $7$-Approximation when all Polygons are $x$-Parallelograms}
\label{section_polygon_area_min}

In this section, we prove that there is an efficient $9.\overline{4}$-approximation algorithm for \textsc{Polygon Area Minimization}, which improves the previous best approximation factor for polynomial time algorithms of $23.78$ by Alt et al. \cite{alt_17,alt_17_corrigendum}. Based on that result, we also show a $7$-approximation in the special case when all polygons are $x$-parallelograms, a special type of parallelograms we will introduce in Definition \ref{def_x_para}.

\subsection{An Efficient 9.45-Approximation for \textsc{Polygon Area Minimization}}
To start the discussions, we introduce the following definition.

\begin{definition}
\label{def_spine}
Let $p \subset [0,1]^2$ be a polygon. A \textbf{spine} $s$ of $p$ is a (straight) line segment connecting a point in $\argmin_{(x,y) \in p} y$ with a point in $\argmax_{(x,y) \in p} y$. We call the angle between the $x$-axis in increasing direction and $s$ the \textbf{angle of $s$}. We say that $s$ \textbf{is tilted to the right} or \textbf{leans to the right}, if this angle is in $(0,\frac{\pi}{2}$] and \textbf{is tilted to the left} or \textbf{leans to the left}, if it is in $[\frac{\pi}{2},\pi)$.
\end{definition}

Sometimes we also talk about ``the'' spine of a polygon, implicitly assuming that one has been fixed. The algorithm of Alt et al. is a shelf-packing algorithm and ordering polygons by the angle of their spines is a crucial step. Our algorithm shares these two characteristics. A key difference is that our algorithm first packs the polygons into parallelograms that have two of their sides parallel to the $x$-axis. We give such parallelograms their own definition:

\begin{definition}
\label{def_x_para}
An \textbf{$x$-parallelogram} is a parallelogram $q \subset \R^2$ that has two of its sides parallel to the $x$-axis. Of those two sides, we call the one with lower $y$-coordinate the \textbf{base} of $q$. We write $\base(q)$ for the length of the base of $q$ and $\wside(q)$ for the width of one of the sides of $q$ which is not parallel to the $x$-axis. Similar as to the definition for spines of polygons, we say that $q$ \textbf{is tilted to the right} or \textbf{leans to the right}, if the angle between its right side and the increasing direction of the $x$-axis is in $(0,\frac{\pi}{2}$] and \textbf{is tilted to the left} or \textbf{leans to the left}, if it is in $[\frac{\pi}{2},\pi)$. We refer to this angle simply as \textbf{angle of $q$}.
\end{definition}

For packing polygons into $x$-parallelograms, we prove the following result, which is a refined version of the discussions of Aamand et al. \cite{aamand_23} in Subsection $5.2.4$ of their paper.

\begin{lemma}
\label{polygon_into_parallelogram}
Let $p$ be a convex polygon. Then there exists an $x$-parallelogram $q$ that contains $p$ and satisfies
\begin{enumerate}[(i)]
    \item $\height(q) = \height(p)$
    \item $\base(q) \leq \width(p)$
    \item $\wside(q) \leq \width(p)$
    \item $\area(q) \leq 2 \area(p)$.
\end{enumerate}
\end{lemma}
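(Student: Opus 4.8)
The goal is to enclose a convex polygon $p$ in an $x$-parallelogram $q$ whose height matches $p$'s, whose base and slanted-side width are each bounded by $\width(p)$, and whose area is at most twice that of $p$. The natural construction is to take the spine $s$ of $p$ — the segment joining a lowest vertex of $p$ to a highest vertex — and build a parallelogram two of whose sides are parallel to $s$ and the other two horizontal, sitting at heights $\min_{(x,y)\in p} y$ and $\max_{(x,y)\in p} y$. Concretely, I would let the two horizontal sides lie on the lines $y = y_{\min}$ and $y = y_{\max}$, and choose the two slanted sides (parallel to $s$) as the leftmost and rightmost supporting lines of $p$ in the direction of $s$. Since $p$ is convex and these two lines are parallel and on opposite sides of $p$, their strip together with the two horizontal strips cuts out exactly a parallelogram $q \supseteq p$. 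Property (i) is immediate from the choice of the horizontal sides, so the content is in (ii), (iii), (iv).

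**The width bounds.** For (iii): the slanted side of $q$ has the same horizontal extent as the spine $s$, because both are segments of slope equal to $s$'s slope spanning the full height $\height(p)$; and the endpoints of $s$ are both points of $p$, so the horizontal extent of $s$ is at most $\width(p)$. Hence $\wside(q)\le\width(p)$. For (ii): the base of $q$ is the horizontal distance between the two slanted supporting lines. I would argue this does not exceed $\width(p)$ by noting that the two slanted supporting lines each touch $p$, so at any fixed height where $p$ is nonempty (e.g. somewhere strictly between $y_{\min}$ and $y_{\max}$, or in the limit) the horizontal gap between the lines is controlled by the horizontal extent of $p$ at that height, which is at most $\width(p)$; taking care of the degenerate case where the supporting contacts occur only at the extreme heights, the bound $\base(q)\le\width(p)$ follows. (This is essentially the observation that a parallelogram circumscribed about a convex body with one pair of sides in a given direction has the minimal "base" when those sides are supporting lines.)

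**The area bound — the main obstacle.** The crux is (iv): $\area(q)\le 2\area(p)$. The clean way is to split $q$ along the spine $s$ into two pieces. The spine runs from a vertex $a$ on the bottom side of $q$ to a vertex $b$ on the top side of $q$, so $s$ is a diagonal-type chord of the parallelogram; it divides $q$ into two sub-polygons $q_L$ (left of $s$) and $q_R$ (right of $s$), each of which is itself... not a parallelogram in general, but each is a quadrilateral (or triangle) of area exactly $\tfrac12\area(q)$, since $a$ and $b$ are the midpoints-in-direction-$s$ of the two horizontal sides? — no, that is not automatic. Let me instead use the genuinely robust argument: $p$ is convex and contains the segment $s$; $p$ lies within the strip between the two slanted supporting lines; so $p \cap q_L$ is a convex region containing $s$ and "filling" $q_L$ enough that $\area(p\cap q_L)\ge \tfrac12\area(q_L)$ — because the left supporting line touches $p$ at some point $c$, and the triangle (or region) $\conv(\{a,b,c\})\subseteq p\cap q_L$ has area at least half of $q_L$ by a base-times-height comparison (both $q_L$ and this triangle have the "width" in direction perpendicular to $s$ equal to the distance from $s$ to the left supporting line, and $q_L$'s extent along $s$ is $\|b-a\|$ while the triangle realizes half of the relevant parallelogram spanned by $s$ and that perpendicular width). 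Symmetrically on the right. Summing, $\area(p)\ge\area(p\cap q_L)+\area(p\cap q_R)\ge\tfrac12(\area(q_L)+\area(q_R))=\tfrac12\area(q)$.

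**What I expect to be delicate.** The area argument needs the two slanted sides of $q$ to be exactly supporting lines of $p$ (otherwise $q$ is too wide and (iv) can fail), and it needs a careful statement of "the triangle on the left fills half of $q_L$" — this is where a picture and a short affine-coordinates computation (apply the affine map sending $s$ to a vertical segment, turning $q$ into an axis-aligned rectangle and reducing everything to: a vertical chord of a rectangle plus one boundary contact point on each side gives two triangles each of area $\ge$ half the respective sub-rectangle) makes it rigorous. I would also handle the edge cases where a lowest or highest point of $p$ is not unique, or where the slanted supporting line meets $p$ only at $a$ or $b$ (then one of $q_L, q_R$ degenerates and the bound is easier). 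These degeneracies are the only real bookkeeping; the geometric heart is the affine reduction to a rectangle.
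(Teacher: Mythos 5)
Your construction and your treatment of (i), (iii) and (iv) coincide with the paper's proof, but your argument for (ii) has a genuine gap: for the spine-parallel supporting parallelogram it is simply not true that $\base(q)\leq\width(p)$. The two slanted supporting lines touch $p$ at contact points which in general lie at \emph{different} heights, and $\base(q)$ (the horizontal gap between the two lines, which is the same at every height) equals the horizontal offset of the left contact point from the spine line measured at that point's height, plus the horizontal offset of the right contact point measured at its own height; each of these can separately be close to $\width(p)$, so $\base(q)$ can approach $2\width(p)$. Concretely, take the convex quadrilateral with vertices $(0,0)$, $(1,0.05)$, $(0.6,1)$, $(0,0.95)$: it has $\width(p)=1$, its spine runs from $(0,0)$ to $(0.6,1)$, and the supporting lines parallel to the spine are $x-0.6y=-0.57$ and $x-0.6y=0.97$, giving $\base(q)=1.54>\width(p)$. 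Your appeal to ``the horizontal extent of $p$ at that height'' fails precisely because the two lines are not witnessed by points of $p$ at a common height, and the minimality remark (supporting lines give the tightest parallelogram with slanted sides in the spine direction) is true but does not imply the claimed bound. This is exactly the situation the paper flags (its Figure~\ref{fig_poly_in_para} is such an example).

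The paper closes this hole with a case distinction that your proposal is missing: if the constructed $q$ satisfies $\base(q)\leq\width(p)$, keep it; otherwise discard it and use the axis-parallel bounding rectangle $r$ of $p$, which trivially satisfies (i)--(iii) and satisfies (iv) because $\area(r)=\width(p)\height(p)<\base(q)\height(q)=\area(q)\leq 2\area(p)$, the last inequality being the area bound you (and the paper) already established for $q$. Adding this fallback case repairs your proof; the rest of your write-up, including the ``triangle with apex at the contact point fills half of each side of $q$'' argument for (iv), is essentially the paper's argument.
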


\begin{proof}
First, we construct a bounding $x$-parallelogram as the one that can be seen in Figure \ref{fig_poly_in_para}. Let $l_b$ and $l_t$ be lines parallel to the $x$-axis and tangent to $p$, touching the bottom of $p$ and the top of $p$ respectively. Choose $p_b \in p \cap l_b$ and $p_t \in p \cap l_t$ and define $s$ to be the line connecting $p_b$ and $p_t$. Note that $s$ is a spine of $p$. Let $s_l$ and $s_r$ be tangent to $p$ and parallel to $s$, lying on the left and on the right of $p$. Let $p_l \in p \cap s_l$ and $p_r \in p \cap s_r$. We now define $q$ to be the set bounded by $l_b,l_t,s_l$ and $s_r$. Note that $q$ is an $x$-parallelogram that satisfies $(i)$. As the left and right sides of $q$ are just translations of $s$, it also satisfies $(iii)$, because of course $s \subset p$ by convexity of $p$.

\begin{figure}[h]
\centering
  \includegraphics[width=.5\linewidth]{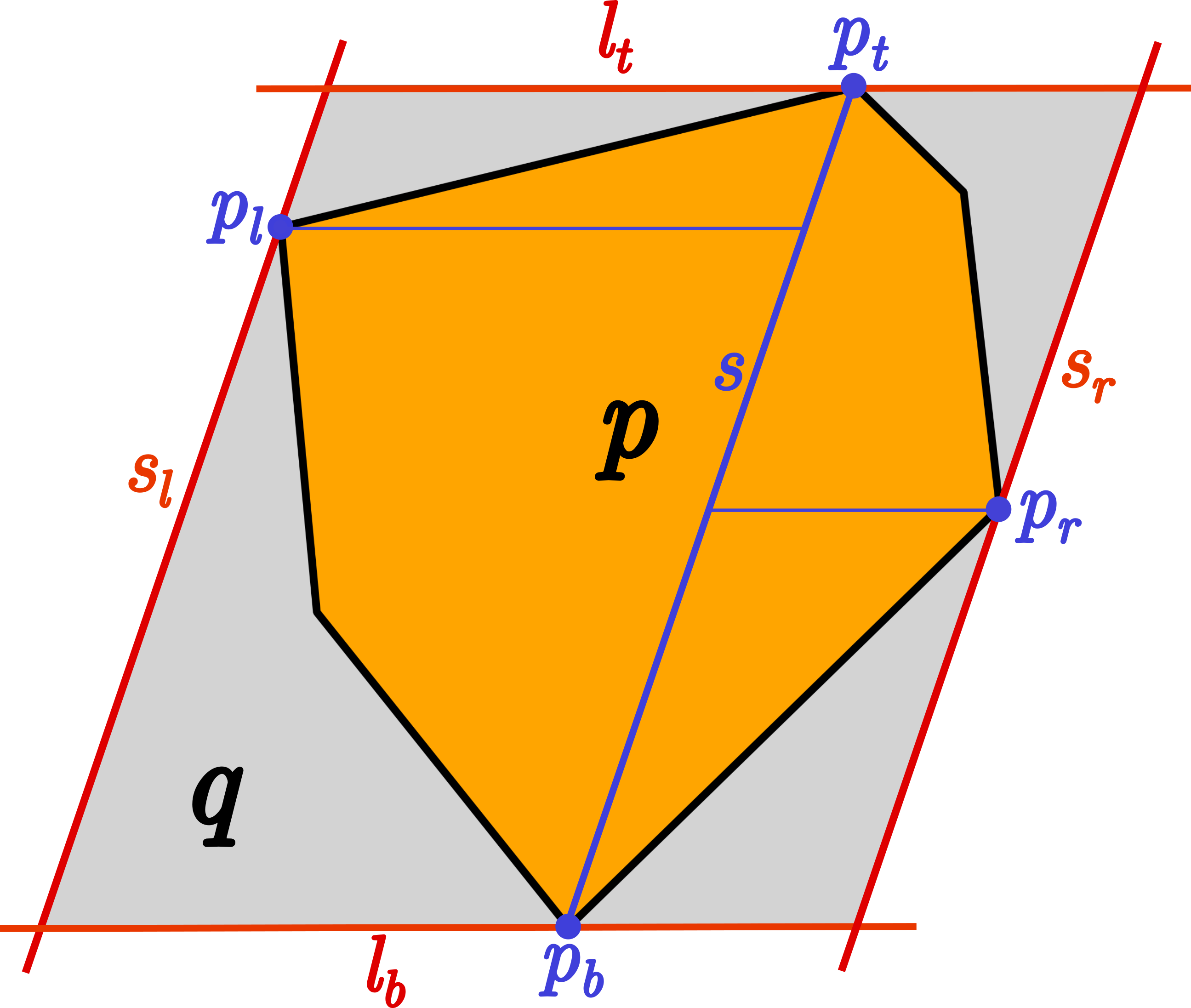}
  \caption{The construction of a bounding parallelogram $q$ from the proof of Lemma \ref{polygon_into_parallelogram} on an example polygon $p$. Note that here, it holds that $\base(q) > \width(p)$.}
  \label{fig_poly_in_para}
\end{figure}

To see that $q$ also satisfies $(iv)$, we consider the triangle with vertices $p_t,p_b$ and $p_l$. We note that it is contained in $p$ due to convexity and contains exactly half the area of the part of $q$ that lies on the left of $s$. Analogously, the triangle with vertices $p_t,p_b$ and $p_r$ has half the area of the part of $q$ that lies on the right of $s$. So $q$ does indeed also satisfy $(iv)$.

However, $q$ does not necessarily satisfy $(ii)$ (see the polygon in Figure \ref{fig_poly_in_para}). If it does, then $q$ satisfies all assumptions of the lemma and we are done.

So we consider now the case when $\base(q) > \width(p)$. Consider the axis-parallel rectangle $r$ bounding $p$. That is, $r$ contains $p$ and each of its sides has non-empty intersection with $p$. Surely $r$ satisfies $(i), (ii)$ and $(iii)$. To see that it also satisfies $(iv)$, note that
\begin{align*}
\area(r) &= \width(r) \height(r) \\
&= \width(p) \height(p) \\
& < \base(q) \height(q) \\
& = \area(q) \\
& \leq 2 \area(p).
\end{align*}
So whenever $\base(q) > \base(p)$, we showed that $r$ satisfies all requirements $(i)-(iv)$ of the lemma instead. Since $r$ is also an $x$-parallelogram, we conclude.
\end{proof}

With the help of Lemma \ref{polygon_into_parallelogram}, we can now present the main result of this chapter.

\begin{theorem}
\label{polygon_packing_approx}
There is a polynomial time $9.\overline{4}$-approximation algorithm for \textsc{Polygon Area Minimization}. Moreover, there is such algorithm with running time $\mathcal{O}(n^2+N)$, where $n$ is the number of polygons and $N$ the total number of vertices in a given input $I \in \polygons$, assuming that each polygon $p \in I$ is given as a list of its vertices.
\end{theorem}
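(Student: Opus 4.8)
The plan is to reduce the polygon problem to a rectangle packing problem via a sequence of shape transformations, then apply the NFDH-based machinery (Theorem \ref{nfdh}) together with a bound on the optimal value. The overall structure is: (1) for each polygon $p \in I$, use Lemma \ref{polygon_into_parallelogram} to enclose $p$ in an $x$-parallelogram $q_p$ with $\height(q_p) = \height(p)$, $\base(q_p), \wside(q_p) \le \width(p)$, and $\area(q_p) \le 2\area(p)$; (2) sort the parallelograms by height (which is legitimate now, since they are $x$-parallelograms and the ``height'' ordering no longer conflicts with orientation in the way it does for general polygons — this is the key conceptual point the introduction emphasizes), and within a common height class, or more globally, split them by orientation into those leaning left and those leaning right; (3) observe that $x$-parallelograms all leaning the same way, sorted by height and placed side-by-side on a shelf, pack into a bounding rectangle of that shelf whose width is roughly the sum of their bases plus one ``overhang'' term of size $\wside$, and whose height is the max height in the shelf — i.e., a shelf of parallelograms tilted the same direction nests cleanly; (4) replace each shelf's content by a single bounding rectangle and run NFDH (or directly bound via Theorem \ref{nfdh}) on these rectangles; (5) collect the constants.

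More concretely, I would first handle the reduction to two sub-instances by orientation. Rotating/reflecting is not allowed, so left-leaning and right-leaning parallelograms must be treated separately; pack each group and stack the two resulting bounding boxes (either side by side or one on top of the other), losing a factor of $2$ somewhere — this is presumably where part of the blow-up over the ideal constant comes from. For a single orientation group, sort by height; a maximal run of same-oriented $x$-parallelograms laid out consecutively along the $x$-axis at a common baseline occupies horizontal extent at most $\sum \base(q) + w_{\max}$ (the slanted sides let consecutive parallelograms interlock except for one end), and vertical extent $h_{\max}$ of the run. So a shelf behaves exactly like a shelf of rectangles of total width $\sum\base(q) \le \sum \width(p)$ plus a bounded additive overhang. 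Then I would invoke Theorem \ref{nfdh}: the NFDH packing of the associated rectangles into a width-$W$ strip has height at most $h_{\max} + 2\,\area/W$, and choosing $W$ comparable to $\sqrt{\opt(I)}$ (or to the width of an optimal bounding box) balances the two terms.

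The arithmetic to nail down $9.\overline{4}$: we have $\opt(I) \ge \area(I) = \sum \area(p) \ge \frac12 \sum\area(q_p)$, and also $\opt(I) \ge h_{\max}(I)^2$-type and $\opt(I) \ge w_{\max}(I) \cdot h_{\max}(I)$ lower bounds (any bounding box must be at least as wide as the widest object and as tall as the tallest). One sets the strip width $W$ for NFDH to optimize, gets a bounding box of area bounded by a constant times $\area(I) + (\text{cross terms in } w_{\max}, h_{\max})$, each cross term itself bounded by $\opt(I)$; summing the two orientation groups doubles the relevant pieces, and the $\area(q_p) \le 2\area(p)$ loss multiplies another piece by $2$. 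Carefully choosing the free parameter $W$ and balancing these contributions yields the constant $85/9 = 9.\overline{4}$. The running-time claim is routine: Lemma \ref{polygon_into_parallelogram}'s construction for polygon $p$ costs $\mathcal{O}(|\vertices(p)|)$ (it only needs extreme vertices in a few directions, or at worst a linear scan), giving $\mathcal{O}(N)$ total; sorting the $n$ parallelograms by height and by angle costs $\mathcal{O}(n\log n)$; and NFDH on $n$ rectangles, implemented naively by scanning shelves, costs $\mathcal{O}(n^2)$ — though here one must also decide the width $W$, which may require trying $\mathcal{O}(n)$ candidate widths (e.g. all values of $\width(p)$ or derived thresholds), each costing $\mathcal{O}(n)$, still within $\mathcal{O}(n^2)$.

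The main obstacle I anticipate is step (3)/(4): making rigorous the claim that a same-oriented shelf of $x$-parallelograms, sorted by height, occupies a bounding rectangle of width only $\sum \base(q) + \mathcal{O}(w_{\max})$ — one has to check that the slanted left/right sides genuinely let consecutive pieces interlock without overlap when heights are monotone along the shelf, and account correctly for the single uncompensated overhang at the shelf's end (and possibly a height mismatch producing a triangular gap). The second delicate point is the optimization of the constant: one wants $9.\overline{4}$ and not something slightly worse, so the choice of strip width and the bookkeeping of which lower bound ($\area(I)$, $w_{\max}h_{\max}$, etc.) absorbs which error term has to be done tightly, likely via an explicit case split on whether the instance is ``wide'' or ``tall'' relative to the chosen $W$.
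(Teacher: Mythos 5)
Your skeleton (enclose each polygon in an $x$-parallelogram via Lemma \ref{polygon_into_parallelogram}, pack derived rectangles with a shelf algorithm, then re-insert the parallelograms paying only an additive overhang per shelf) matches the paper's strategy, but the way you propose to handle orientation and the choice of shelf algorithm would not produce the constant $85/9$; the final arithmetic is asserted rather than derived, and under your own choices it comes out strictly worse. Concretely, the paper never splits the instance into left-leaning and right-leaning groups: it packs the rectangles $(\base(q),\height(q))$ with FFDH into a strip of width $c\,w_{\max}(I)$ (a known quantity, not $\sqrt{\opt(I)}$ or the optimal box width, which you cannot compute), and then, \emph{within each shelf}, reorders the parallelograms by decreasing angle. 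This single sorted shelf accommodates both orientations simultaneously, with the bases abutting and a total overhang of at most $w_{\max}(I)$ on each side, so the width only grows from $c\,w_{\max}$ to $(c+2)w_{\max}$ --- a multiplicative $\frac{c+2}{c}$, not a factor $2$. Your global orientation split, by contrast, duplicates the $h_{\max}w_{\max}$ additive term (each group contributes its own per-shelf height overhead), and optimizing the resulting bound gives a constant above $10$ even with FFDH.

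The second gap is the reliance on NFDH and Theorem \ref{nfdh}. The coefficient $2$ in front of $\area$ there is too lossy: the paper's bound is $\bigl(2\tfrac{c+2}{c}\tfrac{m+1}{m}+(c+2)\bigr)\opt(I)$ with $m=\lfloor c\rfloor$, which at $c=3$ gives $\tfrac{85}{9}=9.\overline{4}$ precisely because FFDH (Theorem \ref{thm:FFDH}) supplies the refined factor $1+\tfrac1m$ when every base is at most a $\tfrac1m$ fraction of the strip width $c\,w_{\max}(I)$; the paper notes explicitly that substituting NFDH in the same framework only yields $11.66$. So to close your argument you would need to (i) drop the orientation split in favour of per-shelf angle sorting with the two-sided additive overhang, (ii) switch to FFDH with strip width $c\,w_{\max}(I)$ and use the lower bounds $\opt(I)\ge\area(I)$ and $\opt(I)\ge h_{\max}(I)\,w_{\max}(I)$ (note your auxiliary bound of the form $\opt(I)\ge h_{\max}(I)^2$ is false in general), and (iii) carry out the optimization over $c$ explicitly, which lands at $c=3$. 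The running-time sketch is essentially fine and matches the paper once the width-guessing step is removed.
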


\begin{proof}
Let $I \in \polygons$. We construct a packing $P$ as the one depicted in Figure \ref{fig_polygon_packing}.

Pack each polygon $p \in I$ into an $x$-parallelogram $q$ as in Lemma \ref{polygon_into_parallelogram}. Call the instance of all so-obtained $x$-parallelograms $I_Q$. The idea of our algorithm is to use FFDH to pack straightened, axis-parallel versions of the $x$-parallelograms $I_Q$ and to then use this packing to obtain one for $I_Q$ and hence also $I$ which is not much bigger.

So, define yet another instance $I_R$ which, for each $q \in I_Q$, contains a rectangle $r=(\base(q),\height(q))$. With FFDH, we now pack $I_R$ into a strip of width $c w_{\max}(I)$, where $c \geq 1$ is to be determined later. Call the so-obtained packing $P_R$. By Theorem \ref{thm:FFDH} it follows
\begin{equation}
\label{area_ffdh_polygon_pack}
\height(P_R)(cw_{\max}(I)) \leq \bigg (1+\frac{1}{m} \bigg)\area(I_R) + c h_{\max}(I_R) w_{\max}(I),  
\end{equation}
where $m = \lfloor c \rfloor$, as $w_{\max}(I_R) = \max_{q \in I_Q} \base(q) \leq w_{\max}(I)$ by Lemma \ref{polygon_into_parallelogram}.

Let $S \subset I_Q$ be the parallelograms corresponding to the rectangles in a certain shelf in the packing $P_R$. We can pack $S$ into a new shelf of width $(c+2)w_{\max}(I)$ by first ordering the parallelograms $S$ by decreasing angle. Indeed, if we, after this ordering, place them all next to each other in the shelf, we note that now all bases of the parallelograms are connected to each other and hence the overlap on either side is at most $\max_{q \in S} \wside(q) \leq w_{\max}(I)$ by Lemma \ref{polygon_into_parallelogram}. Put all such shelves on top of each other and call the so-obtained packing $P_Q$. Note that $P_Q$ has the same height as $P_R$, but $\frac{c+2}{c}$ times its width. Finally, we pack each polygon into its respective parallelogram in the packing $P_Q$. Call this packing $P$. Then
\begin{align*}
\area(P) &\leq \area(P_Q) \\
&\leq \frac{c+2}{c}\height(P_R)(cw_{\max}(I)) \\
& \leq \frac{c+2}{c}\bigg(1+\frac{1}{m}\bigg)\area(I_R) + (c+2) h_{\max}(I_R) w_{\max}(I) \\ 
&= \frac{c+2}{c}\frac{m+1}{m}\area(I_Q) + (c+2) h_{\max}(I_Q) w_{\max}(I) \\ 
& \leq 2 \frac{c+2}{c} \frac{m+1}{m} \area(I) + (c+2) h_{\max}(I) w_{\max}(I) \\
& \leq \bigg ( 2\frac{c+2}{c}\frac{m+1}{m}+(c+2) \bigg) \opt(I).
\end{align*}
One can check that the minimum is attained at $c=3$, in which case we get a $9.\overline{4}$-approximation.

\begin{figure}[h]
\centering
  \includegraphics[width=.7\linewidth]{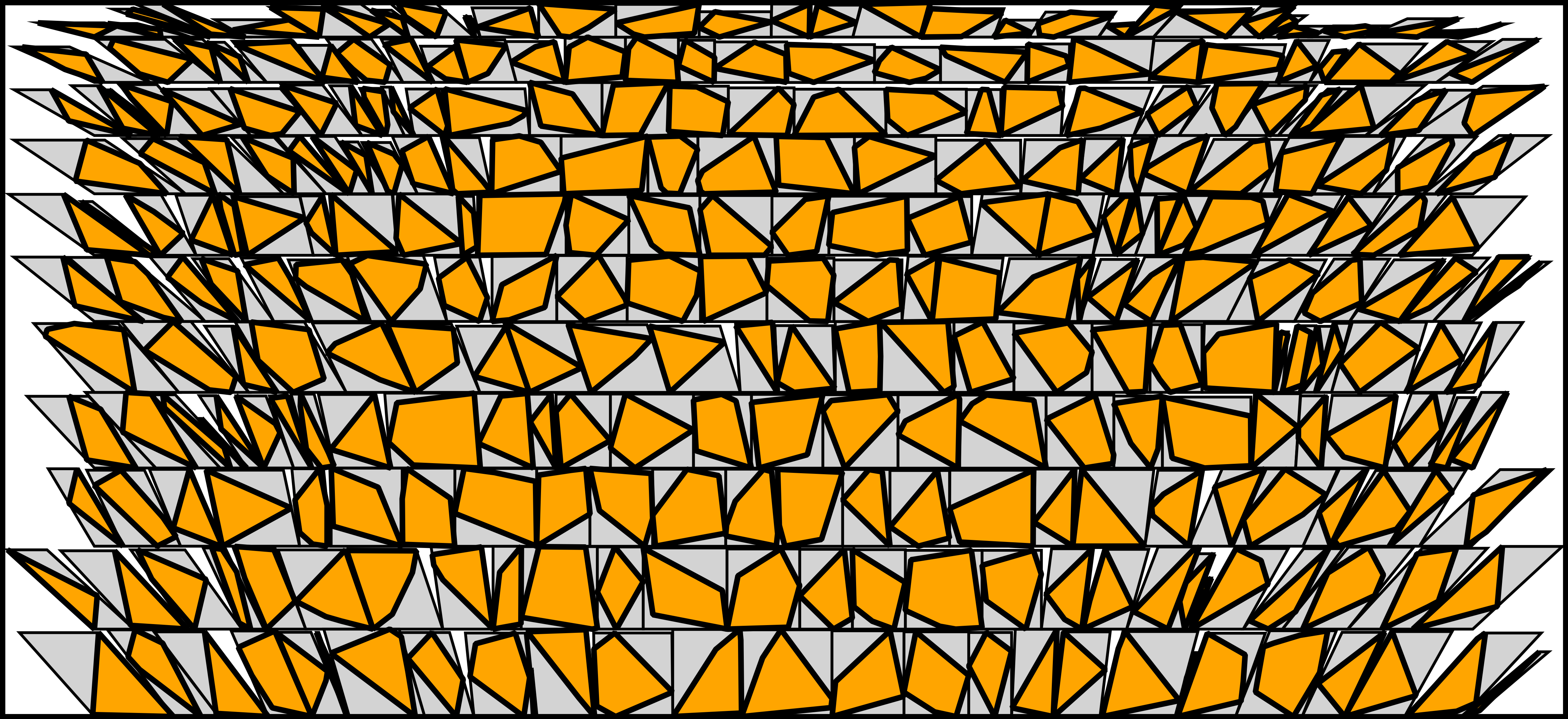}
  \caption{A packing computed with the algorithm from the proof of Theorem \ref{polygon_packing_approx}, here with parameter $c=15$. For each polygon, also its computed bounding $x$-parallelogram is drawn.}
  \label{fig_polygon_packing}
\end{figure}

We now note that the claimed running times of the above algorithm follows by observing that: Constructing $I_Q$ from $I$ can be done in $\mathcal{O}(N)$ time, constructing $I_R$ from $I_Q$ can be done in $\mathcal{O}(n)$ time, constructing $P_R$ can be done in $\mathcal{O}(n^2)$ time, constructing $P_Q$ from $P_R$ can be done by sorting in $\mathcal{O}(n \log(n))$ time, and finally, constructing $P$ from $P_Q$ can be done in $\mathcal{O}(N)$ time.
\end{proof}

The algorithm of Alt et al. runs in time $\mathcal{O}(N \log(N))$ and thus for certain instances faster than our algorithm. If, in our algorithm, the use of FFDH for packing $I_R$ is replaced by NFDH, one can show using Theorem \ref{nfdh} that for an optimal choice of $c=2 \sqrt{2}$, the algorithm has an approximation guarantee of 11.66 while having a running time of $\mathcal{O}(n \log(n) + N)$, thus obtaining an algorithm that runs faster than the algorithm of Alt et al., while still having a greatly improved approximation ratio.

\subsection{An Efficient $7$-Approximation for \textsc{Polygon Area Minimization} when all Polygons are $x$-Parallelograms}
\label{section_x_parallelogram_areamin}

In the algorithm presented in the previous subsection, we first pack general polygons into $x$-parallelograms and afterwards pack these $x$-parallelograms. One would expect that if one wants to pack $x$-parallelograms from the start, one should be able to obtain a better approximation factor. Showing that this is indeed the case is the content of this brief section.

\begin{theorem}
There is a polynomial time $7$-approximation algorithm for \textsc{Polygon Area Minimization}, when all input polygons are $x$-parallelograms.
\end{theorem}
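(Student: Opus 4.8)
The plan is to adapt the algorithm from Theorem \ref{polygon_packing_approx}, but to skip its first, lossy step: since the inputs are already $x$-parallelograms, there is no need to invoke Lemma \ref{polygon_into_parallelogram}, and in particular no factor $2$ is lost in the area. Concretely, given $I \in \polygons$ consisting of $x$-parallelograms $Q$, I would form the rectangle instance $I_R$ that contains, for each $q \in Q$, the axis-parallel rectangle $r=(\base(q),\height(q))$. Note that $\area(I_R) = \area(I)$ exactly, $h_{\max}(I_R) = h_{\max}(I)$, and $w_{\max}(I_R) = \max_q \base(q) \le w_{\max}(I)$, since the base of an $x$-parallelogram is never wider than the parallelogram itself.

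Next I would run FFDH on $I_R$ in a strip of width $c\,w_{\max}(I)$ for a parameter $c \ge 1$ to be optimized, obtaining a packing $P_R$ with, by Theorem \ref{thm:FFDH} (with $m = \lfloor c \rfloor$),
\[
\height(P_R)\,(c\,w_{\max}(I)) \le \Bigl(1+\tfrac1m\Bigr)\area(I_R) + c\,h_{\max}(I_R)\,w_{\max}(I).
\]
Then, exactly as in the proof of Theorem \ref{polygon_packing_approx}, for each shelf of $P_R$ I would take the corresponding $x$-parallelograms, sort them by decreasing angle, and place them base-to-base in a new shelf; the horizontal overflow on each side is then at most $\max_q \wside(q) \le w_{\max}(I)$, so a shelf of width $(c+2)\,w_{\max}(I)$ suffices. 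Stacking these shelves yields a packing $P$ of $I$ of the same height as $P_R$ and $\tfrac{c+2}{c}$ times its width, so
\[
\area(P) \le \frac{c+2}{c}\,\height(P_R)\,(c\,w_{\max}(I)) \le \frac{c+2}{c}\frac{m+1}{m}\,\area(I) + (c+2)\,h_{\max}(I)\,w_{\max}(I) \le \Bigl(\tfrac{c+2}{c}\tfrac{m+1}{m} + (c+2)\Bigr)\opt(I),
\]
where in the last step I use that both $\area(I) \le \opt(I)$ and $h_{\max}(I)\,w_{\max}(I) \le \opt(I)$ (any bounding box must have width at least $w_{\max}(I)$ and height at least $h_{\max}(I)$, hence area at least their product). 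It remains to minimize $\tfrac{c+2}{c}\tfrac{m+1}{m} + (c+2)$ over $c \ge 1$ with $m=\lfloor c\rfloor$; I expect the optimum at $c=2$ (so $m=2$), giving $\tfrac42\cdot\tfrac32 + 4 = 3 + 4 = 7$.

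The main thing to be careful about — the only real obstacle — is the claim that sorting by decreasing angle lets the $x$-parallelograms be packed base-to-base with side-overflow bounded by $w_{\max}(I)$ on each side; this is precisely the geometric insight already used in Theorem \ref{polygon_packing_approx}, so I would either cite that argument or reprove it: placing parallelograms in order of decreasing angle means consecutive tilted sides "nest" so that the bases abut without the interiors colliding, and the only parts sticking out past the union of bases are the slanted sides, each of horizontal extent $\wside(q) \le w_{\max}(I)$. Everything else is a direct specialization of the previous proof with the factor-$2$ removed and $c$ re-optimized, so the write-up should be short. A minor point to double-check is the edge behaviour of Theorem \ref{thm:FFDH} when $c$ is an integer versus non-integer, but taking $c=2$ sidesteps any subtlety there.
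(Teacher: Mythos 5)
Your proposal is correct and follows essentially the same route as the paper's own proof: form the rectangle instance $(\base(q),\height(q))$, pack it with FFDH into a strip of width $c\,w_{\max}(I)$, re-sort each shelf's parallelograms by angle into a shelf of width $(c+2)w_{\max}(I)$, and optimize at $c=2$ to get the factor $\frac{c+2}{c}\frac{m+1}{m}+(c+2)=7$. No substantive differences to report.
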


The proof follows along the lines of the proof of Theorem \ref{polygon_packing_approx}.

\begin{proof}
Let $I \in \polygons$ be so that every $p \in I$ is an $x$-parallelogram. As in the proof of Theorem \ref{polygon_packing_approx}, we define the set $I_R \in \rectangles$ that for every $p \in I$ contains some $r \in I_R$ with $\width(r) = \base(p)$ and $\height(r) = \height(p)$. Again, we pack $I_R$ with FFDH into a strip of width $cw_{\max}(I)$ and call the so-obtained packing $P_R$.

After ordering them by angle, we can pack the parallelograms $S \subset I$ corresponding to the rectangles in some shelf in $P_R$ into a new shelf of width $(c+2)w_{\max}(I)$, because of course $\wside(p) \leq w_{\max}(I)$. Therefore, calling the so-obtained packing $P$,
\begin{align*}
    \area(P) &\leq \frac{c+2}{c}\area(P_R) \\
    & \leq \frac{c+2}{c}\frac{m+1}{m}\area(I) + (c+2)h_{\max}(I)w_{\max}(I) \\
    & \leq \bigg( \frac{c+2}{c}\frac{m+1}{m} + (c+2) \bigg) \opt(I),
\end{align*}
which, for $c=2$, is minimized and equal to $7$.
\end{proof}

\section{An Efficient $(3.75+\epsilon)$-Approximation for \textsc{Polygon Perimeter Minimization} and $(3.56+\epsilon)$-Approximation for \textsc{Polygon Minimum Square}}
\label{section_polygon_perimeter_min}

In this section, we show how to leverage the algorithm from Theorem \ref{polygon_packing_approx} to obtain an approximation-algorithm for \textsc{Polygon Perimeter Minimization}. We improve on the polynomial time $7.3$-approximation algorithm obtained by Aamand et al. \cite{aamand_23} and present an efficient $(3.75 + \epsilon)$-approximation. Moreover, we show how to leverage that result to obtain an $(3.56+\epsilon)$-approximation-algorithm for \textsc{Polygon Minimum Square}.

\subsection{An Efficient $(3.75+\epsilon)$-Approximation for \textsc{Polygon Perimeter Minimization}}

\begin{theorem}
\label{polygon_perimeter_min}
For every $\epsilon>0$, there is an efficient $(3.75+\epsilon)$-approximation algorithm for \textsc{Polygon Perimeter Minimization}.
\end{theorem}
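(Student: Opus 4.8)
The plan is to reduce \textsc{Polygon Perimeter Minimization} to \textsc{Polygon Area Minimization} and invoke Theorem~\ref{polygon_packing_approx}, combined with a scaling/rounding trick to control the aspect ratio of the container. The key observation is that if $R^*$ is an optimal container for the perimeter objective, with $w^* := \width(R^*)$ and $h^* := \height(R^*)$, then $\area(R^*) = w^* h^* \le \tfrac14 (w^* + h^*)^2 = \tfrac{1}{16}\opt(I)^2$ by AM-GM, where here $\opt(I)$ denotes the optimal perimeter. So running the $9.\overline{4}$-approximation for area yields a bounding box of area at most $9.\overline{4}\cdot\tfrac{1}{16}\opt(I)^2$; the trouble is that a box of small area can have arbitrarily bad perimeter if it is very long and thin. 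Hence the real work is to additionally guarantee that the container we produce is ``not too far from square''.

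First I would guess the optimal perimeter value up to a $(1+\epsilon')$ factor: since $\opt(I) \ge 2(w_{\max}(I)+h_{\max}(I))$ and $\opt(I) \le 2(1 + n) $ or some similarly crude polynomial bound, one can try $O(\epsilon'^{-1}\log n)$ geometrically spaced candidate values $\lambda$ for $w^* + h^*$; for each candidate, further guess the split of $\lambda$ into $w^*$ and $h^*$ among $O(\epsilon'^{-1})$ choices, so that we have target dimensions $(\tilde w, \tilde h)$ with $\tilde w \le w^* (1+\epsilon')$, $\tilde h \le h^*(1+\epsilon')$. Then rescale the plane by the linear map $(x,y)\mapsto (x/\tilde w,\ y/\tilde h)$; this is an affine (diagonal) transformation, so it maps convex polygons to convex polygons and preserves the translational packing structure, and under it the (unknown) optimal container becomes a box contained in $[0,1+\epsilon']^2$, i.e. essentially a unit square. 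Now apply the $9.\overline{4}$-approximation for \textsc{Area Minimization} to the rescaled instance: it returns a bounding box of area at most $9.\overline{4}\,(1+\epsilon')^2$, but more importantly the algorithm of Theorem~\ref{polygon_packing_approx} is a shelf-packing algorithm whose container has width exactly $(c+2)w_{\max} = 5 w_{\max} \le 5(1+\epsilon')$ (after rescaling $w_{\max}\le 1+\epsilon'$) and whose height is bounded by the area bound divided by its width; so both dimensions of the rescaled container are $O(1)$. Undo the rescaling to get, for the correct guess, a container of width $O(\tilde w)=O(w^*)$ and height $O(\tilde h)=O(h^*)$, hence perimeter $O(\opt(I))$, and take the best container over all guesses.

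The constant $3.75$ is then extracted by optimizing the parameter $c$ in the underlying shelf-packing algorithm specifically against the perimeter objective rather than the area objective: when the target is a near-unit square, the relevant quantity to bound is $\width(P)+\height(P)$, and running FFDH on the rectangles $I_R$ into a strip of width $c\,w_{\max}$ gives $\width(P)\le (c+2)w_{\max}$ and, via Theorem~\ref{thm:FFDH}, $\height(P)\le h_{\max} + \tfrac1c(1+\tfrac1m)(1+\tfrac2c)\area(I_R)$ with $m=\lfloor c\rfloor$; summing and using $\area(I_R)\le 2\area(I)\le \tfrac12\opt(I)^2\cdot(\text{const})$ together with $w_{\max}, h_{\max}\le \tfrac12\opt(I)$ yields a bound of the form $\big(g(c) + \epsilon\big)\opt(I)$ for an explicit rational function $g$, which one checks is minimized at a small integer or half-integer value of $c$ giving $g(c)=3.75$. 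The $\epsilon$ in the statement absorbs both the $(1+\epsilon')$ guessing loss and the $+1$ additive terms from FFDH, which are lower-order once we know all dimensions are $\Theta(\opt(I))$ bounded below away from $0$ (using $\opt(I)\ge 2\max\{w_{\max},h_{\max}\}$ and, in the degenerate case $w_{\max}$ or $h_{\max}$ is tiny compared to $\opt$, a direct one-shelf argument).

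The main obstacle, and the part requiring genuine care, is the degenerate regime where the optimal container is extremely elongated, so that one of $w^*, h^*$ is vanishingly small relative to $\opt(I)$: there the diagonal rescaling becomes numerically delicate and the additive $+1$ shelf terms are no longer lower-order in the short direction. I would handle this by a case split — if $\min\{w^*,h^*\}$ is below a threshold like $\epsilon\cdot\max\{w^*,h^*\}$, then essentially all polygons are thin in one direction and one can pack everything into $O(1)$ shelves along the long axis by sorting by spine angle, getting perimeter within a constant (indeed small) factor directly — and otherwise proceed with the rescaling argument above where all four relevant lengths are within a bounded ratio of each other and the clean optimization of $g(c)$ applies.
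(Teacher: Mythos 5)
Your overall plan (guess the optimum, run the shelf-packing algorithm of Theorem~\ref{polygon_packing_approx}, and re-optimize its parameter against the perimeter objective) is in the right spirit, and the diagonal rescaling is sound as far as feasibility is concerned, but the quantitative core has a genuine gap. The FFDH height bound for a strip of width $c\,w_{\max}(I)$ is $\height(P_R) \le h_{\max}(I) + \big(1+\frac{1}{m}\big)\frac{\area(I_R)}{c\,w_{\max}(I)}$ (Theorem~\ref{thm:FFDH} after scaling the strip), not $h_{\max} + \frac{1}{c}\big(1+\frac{1}{m}\big)\big(1+\frac{2}{c}\big)\area(I_R)$ as you wrote: the factor $w_{\max}$ in the denominator is essential. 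Consequently, for any \emph{constant} choice of $c$ (``a small integer or half-integer value'') the produced container has width $\Theta(w_{\max}(I))$ and height roughly $\area(I)/(c\,w_{\max}(I))$, which is not $O(\opt(I))$ when the polygons are narrow relative to the optimum. Concretely, take $n$ vertical slivers of width $1/n$ and height $1$: the optimum is a unit square (perimeter $4$), your rescaling leaves the instance essentially unchanged, and your degenerate-case split does not trigger since $w^*=h^*$; yet with $c=3$ the shelf packing has width $5/n$ and height about $n/3$, i.e.\ perimeter $\Theta(n)$. The rescaled $w_{\max}$ can be arbitrarily small even when the optimal box is a square, so ``all four relevant lengths within a bounded ratio'' cannot be arranged, and the elongated-optimum case analysis does not cover the real problematic regime.

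The missing idea is precisely that $c$ must scale like $\opt(I)/w_{\max}(I)$ rather than be a constant. The paper's proof first notes $\opt(I)\ge 2(w_{\max}(I)+h_{\max}(I))$ and $\opt(I)\ge 4\sqrt{\area(I)}$, assumes without loss of generality $w_{\max}(I)\le \frac{1}{4}\opt(I)$ (otherwise pack into vertical shelves), substitutes $c = l\,\opt(I)/(4 w_{\max}(I))$ with $l\ge 1$, which gives $\width(P)\le \frac{l}{4}\opt(I)+2w_{\max}(I)$ and $\height(P)\le \frac{1}{2l}\frac{m+1}{m}\opt(I)+h_{\max}(I)$, and minimizes over $l$ (at $l=2$) to obtain the factor $3.75$. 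Since $\opt(I)$ is unknown, it then simply runs the algorithm for geometrically spaced values $c=(1+\epsilon)^k$ and keeps the best perimeter, losing only a factor $(1+\epsilon)$; no rescaling of the plane and no two-parameter guessing of $(\tilde w,\tilde h)$ is needed. If you replace your fixed $c$ by this substitution, your guessing framework essentially collapses to the paper's argument; as written, however, your proof does not establish an $O(1)$-approximation, let alone $3.75+\epsilon$.
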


\begin{proof}
Let $I \in \polygons$. Note that for the perimeter objective, it surely holds that
\begin{equation}
\label{opt_perimeter_vs_wmax_hmax}
\opt(I) \geq 2(w_{\max}(I) + h_{\max} (I)).
\end{equation}
Furthermore, since a bounding box of $I$ has area at least $\area(I)$ and the minimum perimeter rectangle having area $\area(I)$ is a square, it also is true that
\[
\opt(I) \geq 4 \sqrt{\area(I)}.
\]
It follows from Inequality \ref{opt_perimeter_vs_wmax_hmax} that
\[
\min \{h_{\max}(I),w_{\max}(I)\} \leq \frac{1}{4}\opt(I)
\]
and without loss of generality, we assume that $w_{\max}(I) \leq \frac{1}{4}\opt(I)$. Otherwise we are making the following argument by packing into vertical shelves instead.

Let $P$ be the packing obtained from the algorithm in Theorem \ref{polygon_packing_approx}, leaving $c$ as a free parameter. Then $P$ satisfies
\[
\width(P) \leq (c+2)w_{\max}(I)
\]
and by dividing the inequality (\ref{area_ffdh_polygon_pack}) by the width of the strip used for the rectangle packing obtained by FFDH, $cw_{\max}(I)$, we see that
\[
\height(P) \leq 2 \frac{m+1}{m} \frac{\area(I)}{cw_{\max}(I)} + h_{\max}(I) \leq \frac{1}{8} \frac{m+1}{m} \frac{\opt(I)^2}{cw_{\max}(I)} + h_{\max}(I).
\]
We restrict the domain of $c$ so that $c \geq \frac{\opt(I)}{4w_{\max}(I)} \geq 1$ and write $c = l \frac{\opt(I)}{4w_{\max}(I)}$ for some $l \geq 1$. Then
\[
\width(P) \leq \frac{l}{4} \opt(I) + 2w_{\max}(I), \quad \height(P) \leq \frac{1}{2l}\frac{m+1}{m} \opt(I) + h_{\max}(I).
\]
The perimeter of $P$ is hence bounded by
\begin{align*}
2(\width(P) + \height(P)) &\leq 2 \bigg( \bigg(\frac{l}{4} + \frac{m+1}{2lm} \bigg) \opt(I) + 2w_{\max}(I) + h_{\max}(I) \bigg) \\
& \leq 2 \bigg ( \frac{1}{4} \bigg(l + \frac{2(m+1)}{lm} \bigg) + 1 \bigg) \opt(I) \\
& = \frac{1}{2} \bigg( l+\frac{2(m+1)}{lm} + 4 \bigg) \opt(I),
\end{align*}
which is minimized and equal to $3.75 \opt(I)$ when $l$ is equal to $\overline{l}:=2$, where we used that $m = \lfloor c \rfloor \geq \lfloor l \rfloor$.

Now since the value of $\opt(I)$ is not known beforehand and since $\overline{l}$ depends on it, we need to guess an optimal value for $l$. This can be done as follows. Compute the packing from the algorithm in Theorem \ref{polygon_packing_approx} for $c=1,(1+\epsilon), \dots, (1+\epsilon)^K$, where $K=\frac{\log(n)}{\log(1+\epsilon)}$ and denote the packing obtained for $c=(1+\epsilon)^k$ by $P_k$ for all $k \in \{0,1, \dots, n\}$. Over all those packings, choose the one that has minimum perimeter. Say this perimeter is $z > 0$. Let $k \in \{1, \dots, K\}$ be so that
\[
(1+\epsilon)^{k-1} \leq \overline{c} \leq (1+\epsilon)^{k},
\]
where $\overline{c} := \overline{l} \frac{\opt(I)}{4w_{\max}(I)}$. Then, for $l_k := (1+\epsilon)^k \frac{4w_{\max}(I)}{\opt(I)}$, it holds that
\[
l_k = (1+\epsilon)^k \frac{4w_{\max}(I)}{\opt(I)} \leq (1+\epsilon) \overline{c}\frac{4w_{\max}(I)}{\opt(I)} = (1+\epsilon)\overline{l}.
\]
In particular, as of course also $\overline{l} \leq l_k$, it holds that
\begin{align*}
z & \leq 2(\width(P_k) + \height(P_k)) \\
&\leq \frac{1}{2} \bigg( l_k+\frac{2(m+1)}{l_km} + 4 \bigg) \opt(I) \\
&\leq (1+\epsilon) \frac{1}{2} \bigg( \overline{l}+\frac{2(m+1)}{\overline{l}m} + 4 \bigg) \opt(I) \\
&\leq (1+\epsilon)3.75 \opt(I),
\end{align*}
which shows the statement.
\end{proof}

\subsection{An Efficient $(3.56+\epsilon)$-Approximation for \textsc{Polygon Minimum Square}}
\label{section_polygon_min_square}

In this section, we show how to leverage the algorithm from Theorem \ref{polygon_packing_approx} to obtain an approximation-algorithm for \textsc{Polygon Minimum Square}.

\begin{theorem}
For every $\epsilon>0$, there is an efficient $(3.56+\epsilon)$-approximation algorithm for \textsc{Polygon Minimum Square}.
\end{theorem}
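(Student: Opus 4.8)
The plan is to reuse the algorithm of Theorem~\ref{polygon_packing_approx} with its parameter $c$ kept free, exactly as in the proof of Theorem~\ref{polygon_perimeter_min}, and to tune $c$ so that the width and the height of the produced packing are simultaneously small compared with $\opt(I)$; since the good choice of $c$ depends on the unknown $\opt(I)$, we finish with the same geometric guessing device. First I would collect the lower bounds for the \textsc{Minimum Square} objective: for any bounding box $R$ of $I$ we have $\width(R)\ge w_{\max}(I)$, $\height(R)\ge h_{\max}(I)$ and $\width(R)\,\height(R)\ge\area(I)$, whence
\[
\opt(I)\ \ge\ \max\bigl\{\,w_{\max}(I),\ h_{\max}(I),\ \sqrt{\area(I)}\,\bigr\}.
\]
Second, dividing the inequality (\ref{area_ffdh_polygon_pack}) by $c\,w_{\max}(I)$ and using $\area(I_R)\le 2\area(I)$, $h_{\max}(I_R)=h_{\max}(I)$ and $w_{\max}(I_R)\le w_{\max}(I)$ from Lemma~\ref{polygon_into_parallelogram} (as in the proof of Theorem~\ref{polygon_perimeter_min}), the packing $P$ returned by the algorithm of Theorem~\ref{polygon_packing_approx} run with parameter $c\ge 1$ satisfies $\width(P)\le (c+2)\,w_{\max}(I)$ and $\height(P)\le \frac{2(m+1)}{m}\cdot\frac{\area(I)}{c\,w_{\max}(I)} + h_{\max}(I)$, with $m=\lfloor c\rfloor$.

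The core step is the substitution $c=\lambda\cdot\opt(I)/w_{\max}(I)$ for a constant $\lambda\in[1,2)$. This is admissible because $\opt(I)\ge w_{\max}(I)$ gives $c\ge\lambda\ge 1$, and it also gives $m=\lfloor c\rfloor\ge\lfloor\lambda\rfloor=1$, hence $\tfrac{2(m+1)}{m}\le 4$. Plugging in $w_{\max}(I)\le\opt(I)$, $h_{\max}(I)\le\opt(I)$ and $\area(I)\le\opt(I)^2$ yields
\[
\width(P)\ \le\ (\lambda+2)\,\opt(I),\qquad \height(P)\ \le\ \Bigl(\tfrac{4}{\lambda}+1\Bigr)\,\opt(I),
\]
so $\max\{\width(P),\height(P)\}\le\max\{\lambda+2,\ \tfrac4\lambda+1\}\cdot\opt(I)$. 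Optimising over $\lambda$ amounts to solving $\lambda+2=\tfrac4\lambda+1$, i.e.\ $\lambda^2+\lambda-4=0$, whose relevant root $\lambda^\star=\tfrac{\sqrt{17}-1}{2}$ lies in $[1,2)$ and gives the constant $\lambda^\star+2=\tfrac{3+\sqrt{17}}{2}$, which is what the statement rounds to $3.56$.

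To handle the unknown $\opt(I)$ — hence the unknown target $\overline c=\lambda^\star\opt(I)/w_{\max}(I)$ — I would run the algorithm of Theorem~\ref{polygon_packing_approx} for $c=(1+\epsilon)^0,(1+\epsilon)^1,\dots,(1+\epsilon)^K$ with $K$ polynomial in the encoding length of $I$ (large enough that $(1+\epsilon)^K\ge\overline c$; a trivial grid packing gives $\opt(I)\le\sqrt n+1$ while $w_{\max}(I)^{-1}$ is at most exponential in the input length, exactly as in the proof of Theorem~\ref{polygon_perimeter_min}), and output the bounding box of minimum $\max\{\width,\height\}$ among these. For the index $k$ with $(1+\epsilon)^{k-1}\le\overline c\le(1+\epsilon)^k$, the effective parameter $\lambda_k:=(1+\epsilon)^k w_{\max}(I)/\opt(I)$ satisfies $\lambda^\star\le\lambda_k\le(1+\epsilon)\lambda^\star$ while $m\ge 1$ still holds, so that packing has $\max\{\width,\height\}\le(1+\epsilon)\tfrac{3+\sqrt{17}}{2}\,\opt(I)$, which yields the claimed $(3.56+\epsilon)$-approximation upon relabelling $\epsilon$, and the returned box is no worse.

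I do not foresee a genuine obstacle, since the whole argument is a reparametrisation of the proof of Theorem~\ref{polygon_perimeter_min}; the two points demanding care are committing to $\lambda<2$ so that $m\ge 1$ holds throughout (while still letting $m$ grow when $w_{\max}(I)$ is tiny, which is exactly where the balance tips towards the width term), and checking that the geometric search over $c$ degrades the width bound and the height bound by only a simultaneous $(1+\epsilon)$ factor — both handled in the same way as in the perimeter proof. The mildly delicate modelling choice is which of the three lower bounds to push to equality; pushing all of $w_{\max}(I)=h_{\max}(I)=\sqrt{\area(I)}=\opt(I)$ at once is infeasible for a nontrivial instance, but $w_{\max}(I)=\sqrt{\area(I)}=\opt(I)$ with $h_{\max}(I)$ arbitrarily close to $\opt(I)$ is attainable, and this is the configuration that forces the constant $\tfrac{3+\sqrt{17}}{2}$.
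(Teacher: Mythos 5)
Your proposal is correct and follows essentially the same route as the paper: the same lower bounds $\opt(I)\ge\max\{w_{\max}(I),h_{\max}(I),\sqrt{\area(I)}\}$, the same substitution $c=\lambda\,\opt(I)/w_{\max}(I)$ (the paper's $l$) leading to $\max\{\lambda+2,\tfrac4\lambda+1\}$ and the root $\tfrac{\sqrt{17}-1}{2}$, and the same geometric guessing of $c$ borrowed from the perimeter proof. The only (harmless) difference is that you bound $\tfrac{2(m+1)}{m}\le 4$ upfront and spell out the choice of $K$ slightly more carefully.
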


The proof is similar to the proof of Theorem \ref{polygon_perimeter_min}.

\begin{proof}
Let $I \in \polygons$. Note that
\[
\opt(I) \geq \max\{w_{\max}(I),h_{\max}(I)\}
\]
and also
\[
\opt(I) \geq \sqrt{\area(I)}.
\]
Analogously to the proof of Theorem \ref{polygon_perimeter_min}, but substituting $c=l\frac{\opt(I)}{w_{\max}(I)}$ for some $l \geq 1$ instead, we can construct a packing $P$ with
\[
\width(P) \leq l \opt(I) + 2w_{\max}(I), \quad \height(P) \leq \frac{2}{l}\frac{m+1}{m}\opt(I) + h_{\max}(I).
\]
Then
\[
\max\{\width(P),\height(P)\} \leq \max \bigg\{l+2,\frac{2(m+1)}{lm}+1 \bigg\} \opt(I).
\]
The minimum is attained for $l=\overline{l}:=\frac{1}{2}(\sqrt{17}-1)$ in which case $m=1$ and the approximation factor is equal to to $\frac{1}{2}(\sqrt{17}+3) \approx 3.56$. As in the proof of Theorem \ref{polygon_perimeter_min}, we can guess the value of $\overline{l}$ to obtain a $(3.56+\epsilon)$-approximation algorithm.
\end{proof}



\section{An Efficient $21.89$-Approximation for \textsc{Polygon Strip Packing}}
\label{section_polygon_strip}

In this section, we show how to obtain a polynomial time $21.\overline{8}$-approximation for \textsc{Polygon Strip Packing}, improving on the previous best known approximation factor for efficient algorithms of $51$ by Aamand et al. \cite{aamand_23}.
The idea is to construct vertical shelves with the algorithm from Theorem \ref{polygon_packing_approx} and then to stack such vertical shelves horizontally into the strip. This idea comes from \cite{aamand_23}. We slightly improve their procedure using the following observation.

\begin{lemma}
\label{split_shelf}
Let $\overline{I} \in \polygons$ and let $\overline{P} \subset [0,(c+2)w_{\max}(\overline{I})] \times [0, \infty)$ be a shelf-packing obtained by the algorithm from Theorem \ref{polygon_packing_approx} for some $c \geq 1$. Let $I \subset \overline{I}$ be the polygons in one of the shelves of $\overline{P}$ and define the packing $P:=\{\overline{P}(p)\}_{p \in I}$. Then there is a shelf-packing $P'$ of $I$ into two shelves with $\width(P') \leq \frac{1}{2}(c+3)w_{\max}(\overline{I})$ and $\height(P') \leq 2\height(P)$.
\end{lemma}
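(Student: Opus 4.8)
The plan is to keep the $x$-parallelograms $q_p$ ($p\in I$) that the algorithm of Theorem~\ref{polygon_packing_approx} assigned to the polygons of $I$, and to redistribute them over two bottom-aligned shelves, in each of which they are again placed side by side sorted by decreasing angle. Two features of $P$ drive the argument: along the base line the parallelograms of $P$ fall into a left-leaning block occupying the positions $[0,L^{(L)}]$ followed by a right-leaning block occupying $[L^{(L)},L]$, where $L=\sum_{p\in I}\base(q_p)$; and since $P$ is a single FFDH shelf inside a strip of width $c\,w_{\max}(\overline I)$, we have $L\le c\,w_{\max}(\overline I)$, while $\base(q_p),\wside(q_p)\le\width(p)\le w_{\max}(\overline I)$ by Lemma~\ref{polygon_into_parallelogram}.

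The height bound is immediate: both shelves I build have height $\max_{p\in I}\height(p)=\height(P)$, so $\height(P')\le 2\height(P)$.

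For the width I will use the elementary fact that a shelf formed from a family $S$ of parallelograms placed side by side and sorted by decreasing angle has width at most $\sum_{q\in S}\base(q)$ plus a left overhang of at most $\max\{\wside(q):q\in S\text{ leans left}\}$ and a right overhang of at most $\max\{\wside(q):q\in S\text{ leans right}\}$; in particular a one-sided shelf has width at most $(\text{total base})+w_{\max}(\overline I)$ and an arbitrary shelf at most $(\text{total base})+2w_{\max}(\overline I)$. If $L^{(L)}\le\frac{1}{2}(c+1)w_{\max}(\overline I)$ and $L^{(R)}\le\frac{1}{2}(c+1)w_{\max}(\overline I)$, put the left-leaning parallelograms in one shelf and the right-leaning ones in the other: each shelf is one-sided with total base $\le\frac{1}{2}(c+1)w_{\max}(\overline I)$, hence width $\le\frac{1}{2}(c+3)w_{\max}(\overline I)$. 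Otherwise one block, say the left-leaning one, has $L^{(L)}>\frac{1}{2}(c+1)w_{\max}(\overline I)$, which forces $L^{(R)}<\frac{1}{2}(c-1)w_{\max}(\overline I)$; then I keep a one-sided left-leaning shelf of total base at most $\frac{1}{2}(c+1)w_{\max}(\overline I)$ and move the remaining left-leaning parallelograms, together with the whole right-leaning block, into the second shelf, which must therefore be kept to total base at most $\frac{1}{2}(c-1)w_{\max}(\overline I)$ so that its (mixed) width stays $\le\frac{1}{2}(c+3)w_{\max}(\overline I)$. The two base budgets $\frac{1}{2}(c+1)w_{\max}(\overline I)$ and $\frac{1}{2}(c-1)w_{\max}(\overline I)$ sum to exactly $c\,w_{\max}(\overline I)\ge L$, so an admissible split of the left-leaning base lengths exists set-theoretically, and I would realise it by a greedy subset-sum argument over the left-leaning parallelograms.

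The delicate point — which I expect to be the main obstacle — is exactly making that last split fit within the stated constant. When $L$ is close to $c\,w_{\max}(\overline I)$ (the typical case for an FFDH shelf) the admissible window for the kept base length has length $c\,w_{\max}(\overline I)-L$, which may be smaller than a single $\base(q_p)$, so a naive greedy can overshoot by up to one item and leave a mixed second shelf of large base and hence width $\frac{1}{2}(c+5)w_{\max}(\overline I)$. The repair I anticipate is to stop bounding overhangs by the uniform $w_{\max}(\overline I)$ and instead exploit $\wside(q_p)\le\width(p)$: route the parallelograms with large tilted side into the one-sided shelf, so that the mixed second shelf has overhang strictly below $2w_{\max}(\overline I)$ and can correspondingly absorb a larger base, closing the gap. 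Degenerate situations ($L^{(L)}=0$, $L^{(R)}=0$, a single very wide parallelogram, or $c$ close to $1$) should be checked by hand.
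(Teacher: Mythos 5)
There is a genuine gap here, and it is exactly the one you flag yourself. Your construction must split the left-leaning base mass between a one-sided shelf (base budget $\tfrac{1}{2}(c+1)w_{\max}(\overline{I})$) and a mixed shelf (base budget $\tfrac{1}{2}(c-1)w_{\max}(\overline{I})$); since these budgets sum to exactly $c\,w_{\max}(\overline{I})\ge L$, the admissible window for the kept base length has length $c\,w_{\max}(\overline{I})-L$, which can be essentially zero while the items are indivisible, and a greedy split can overshoot by the base of one item, i.e.\ by up to $w_{\max}(\overline{I})$, yielding only $\width(P')\le\tfrac{1}{2}(c+5)w_{\max}(\overline{I})$. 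The repair you sketch does not close this: all right-leaning parallelograms are forced into the mixed shelf, so its right overhang can equal $w_{\max}(\overline{I})$ no matter how you route items; and nothing prevents every left-leaning parallelogram from having $\wside$ close to $w_{\max}(\overline{I})$ (the bounding parallelogram of Lemma~\ref{polygon_into_parallelogram} can have both $\base$ and $\wside$ equal to the polygon's width, e.g.\ for a right triangle), in which case moving large-$\wside$ items into the one-sided shelf saves nothing on the left overhang either, while the overshooting item can itself have base close to $w_{\max}(\overline{I})$. So, as written, the claimed bound $\tfrac{1}{2}(c+3)w_{\max}(\overline{I})$ is not established.

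The paper sidesteps the subset-sum issue entirely by splitting positionally rather than by orientation and base budgets: set $x_{\mathrm{mid}}:=\tfrac{1}{2}(c+2)w_{\max}(\overline{I})$, let $I_L$ consist of the polygons whose placement in $P$ has at most half of its width to the right of $x_{\mathrm{mid}}$, and let $I_R=I\setminus I_L$. Every placed polygon of $I_L$ lies in $[0,\,x_{\mathrm{mid}}+\tfrac{1}{2}w_{\max}(\overline{I})]\times\R$ and every placed polygon of $I_R$ lies in $[x_{\mathrm{mid}}-\tfrac{1}{2}w_{\max}(\overline{I}),\,(c+2)w_{\max}(\overline{I})]\times\R$, so re-packing each part as its own shelf (respecting the order of $P$) gives two shelves of width at most $x_{\mathrm{mid}}+\tfrac{1}{2}w_{\max}(\overline{I})=\tfrac{1}{2}(c+3)w_{\max}(\overline{I})$ and height at most $\height(P)$ each, which is the statement. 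Your height bound and the general shelf-overhang facts are fine; it is only the redistribution step that needs to be replaced, either by this positional cut or by a genuinely new idea handling the indivisibility of the overshooting item.
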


\begin{proof}
The idea is to simply split the shelf in half in its middle as follows. Let $x_{\text{mid}} := \frac{(c+2)w_{\max}(\overline{I})}{2}$ and partition $I$ into the two sets
\[
I_L:=\bigg \{p \in I \, | \, \width \big(P(p) \cap (\R_{\geq x_{\text{mid}}} \times \R) \big) \leq \frac{\width(p)}{2} \bigg \}
\]
and $I_R = I \backslash I_L$. Note that
\[
I_R \subset \bigg \{p \in I \, | \, \width \big(P(p) \cap (\R_{\leq x_{\text{mid}}} \times \R) \big) < \frac{\width(p)}{2}\bigg\}.
\]
We can now pack $I_L$ and $I_R$ into separate shelves while respecting the ordering of the polygons in $P$. We denote the packing where both of those shelves are stacked on each other by $P'$. Both shelves have width at most
\[
x_{\text{mid}} + \frac{w_{\max}(I)}{2} \leq x_{\text{mid}} + \frac{w_{\max}(\overline{I})}{2} = \frac{1}{2}(c+3)w_{\max}(\overline{I}).
\]
and hence $P'$ does as well.
\end{proof}

Making use of the lemma, we can show the following result.

\begin{theorem}
There is a polynomial time $21.\overline{8}$-approximation algorithm for \textsc{Polygon Strip Packing}.
\end{theorem}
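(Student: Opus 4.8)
The plan is to combine the area-packing machinery of Theorem~\ref{polygon_packing_approx} with the shelf-splitting trick of Lemma~\ref{split_shelf}, following the strategy of Aamand et al.\ but tightening the constants. First I would reduce to the case $w_{\max}(I) \le \opt(I)$: since the strip has width $1$ and every polygon must fit, we have $w_{\max}(I) \le 1 \le \opt(I)$ automatically, so this is free. The idea is to run the algorithm of Theorem~\ref{polygon_packing_approx} with a free parameter $c$, producing a packing $\overline P$ that is a stack of horizontal shelves, each of width at most $(c+2)w_{\max}(I)$; we then reinterpret each such shelf as a \emph{vertical} shelf (rotating the whole picture by $90^\circ$ in our head, i.e.\ swapping the roles of the axes for the outer layout only — the polygons themselves are not rotated, only the arrangement of shelves) and stack these vertical shelves side by side along the width-$1$ strip.

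The key quantitative steps are as follows. From inequality~(\ref{area_ffdh_polygon_pack}), dividing by $cw_{\max}(I)$, the total height of $\overline P$ (which will become the total \emph{width} consumed when we lay shelves vertically) is at most $2\frac{m+1}{m}\frac{\area(I)}{cw_{\max}(I)} + h_{\max}(I)$ with $m=\lfloor c\rfloor$. Each individual shelf has width at most $(c+2)w_{\max}(I)$ (this becomes the \emph{height} of the strip-packing once laid vertically). To make the vertical shelves fit into width $1$, I would first apply Lemma~\ref{split_shelf} to each shelf, halving its width to at most $\frac12(c+3)w_{\max}(I)$ at the cost of doubling its height; the "height" here is the extent along the strip, so this at most doubles the strip width consumed, giving a bound of the form $\height(\text{strip}) \le 2\left(2\frac{m+1}{m}\frac{\area(I)}{cw_{\max}(I)} + h_{\max}(I)\right)$ while the width needed per vertical shelf drops to $\frac12(c+3)w_{\max}(I) \le 1$, which one arranges to hold by choosing $c$ appropriately relative to $1/w_{\max}(I)$. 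Using $\area(I) \le \opt(I)$, $h_{\max}(I) \le \opt(I)$, and $w_{\max}(I) \le \opt(I)$, one bounds everything by a constant times $\opt(I)$, optimizes over $c$ (and over the scaling parameter $l$ writing $c = l/w_{\max}(I)$ or similar, as in the proof of Theorem~\ref{polygon_perimeter_min}), and checks the optimum gives the factor $21.\overline{8}$. As in Theorem~\ref{polygon_perimeter_min}, since the optimal $c$ depends on the unknown $\opt(I)$, I would guess it by trying $c = (1+\epsilon)^k$ over a polynomial range and keeping the best packing; but here one should be able to avoid the $\epsilon$ loss because the strip width $1$ is known, so the relevant breakpoint for $c$ is determined by $w_{\max}(I)$ alone, which is computed exactly from the input — hence a clean $21.\overline{8}$ rather than $21.\overline{8}+\epsilon$.

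The main obstacle I anticipate is the interplay between the shelf-splitting of Lemma~\ref{split_shelf} and the width constraint: splitting each shelf into two halves of width $\frac12(c+3)w_{\max}(I)$ requires this quantity to be at most $1$, which forces $c \le \frac{2}{w_{\max}(I)} - 3$, and one must check this is compatible with the range of $c$ that makes the approximation ratio come out to $21.\overline{8}$ — in particular that $c \ge 1$ and $m = \lfloor c\rfloor \ge 1$ still hold in the relevant regime. There is also a bookkeeping subtlety in that the last vertical shelf may not be full, contributing an additive $+(c+3)w_{\max}(I) \le$ const term to the strip height; one absorbs this using $w_{\max}(I) \le \opt(I)$ exactly as the additive $h_{\max}$ and $w_{\max}$ terms are absorbed in Theorems~\ref{polygon_packing_approx} and~\ref{polygon_perimeter_min}. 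Finally, the running time is polynomial: it is dominated by the $\mathcal{O}(n^2)$ FFDH step plus the sorting within shelves, exactly as in Theorem~\ref{polygon_packing_approx}, with the shelf-splitting adding only linear overhead.
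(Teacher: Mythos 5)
There is a genuine gap, and it lies at the heart of the construction. You cannot ``reinterpret'' a packed horizontal shelf as a vertical shelf without rotating its contents, which the model forbids; and the fallback reading of your plan (tie $c$ to $1/w_{\max}(I)$ so that the split shelves have width $\frac12(c+3)w_{\max}(I)\le 1$ and stack them in the strip) breaks down exactly on the hard instances: the constraint $c\le 2/w_{\max}(I)-3$ is incompatible with $c\ge 1$ (and with $m=\lfloor c\rfloor\ge 1$, which Theorem~\ref{thm:FFDH} needs) as soon as $w_{\max}(I)>1/2$, and wide polygons are precisely the case the theorem must cover (narrow ones are already handled in Section~\ref{section_polygon_1/M}). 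The paper avoids this entirely: it rotates every input polygon by $\pi/2$, runs the area algorithm of Theorem~\ref{polygon_packing_approx} with the \emph{fixed} choice $c=3$ on the rotated instance, and rotates the whole packing back, so each polygon ends up unrotated while the shelves are genuinely vertical, of common height $(c+2)h_{\max}(I)$ and width at most $w_{\max}(I)\le 1$ — hence every shelf fits across the strip for free, with no coupling between $c$ and $w_{\max}(I)$ and no guessing of $c$ (so no $\epsilon$ arises). These vertical shelves are then packed side by side into horizontal bands of height $(c+2)h_{\max}(I)$ by a greedy 1D argument, every band except possibly the last being at least half full, and Lemma~\ref{split_shelf} is invoked \emph{only} on the shelves of the last, under-filled band to shave its height to $\frac12(c+3)h_{\max}(I)$, giving
\[
\height(P')\le 2\area(P)+\tfrac12(c+3)h_{\max}(I)\le\Big(4\tfrac{c+2}{c}\tfrac{m+1}{m}+\tfrac52 c+\tfrac{11}{2}\Big)\opt(I)=21.\overline{8}\,\opt(I)\quad\text{at }c=3.
\]

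Two further errors in your accounting would also need repair. First, you apply Lemma~\ref{split_shelf} to every shelf, which doubles the entire area term rather than only the additive contribution of one band; used that way the constants cannot come out to $21.\overline{8}$. Second, the inequalities $\opt(I)\ge 1$ and $w_{\max}(I)\le\opt(I)$ that you invoke to absorb additive terms are false for strip packing: a single wide, flat polygon has $w_{\max}(I)=1$ while $\opt(I)=h_{\max}(I)$ can be arbitrarily small. Only $h_{\max}(I)\le\opt(I)$, $\area(I)\le\opt(I)$ and $w_{\max}(I)\le 1$ are available, which is exactly why the paper's bands are sized by $h_{\max}(I)$ and not by $w_{\max}(I)$.
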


\begin{proof}
Let $I \in \polygons$. We apply the algorithm from Theorem \ref{polygon_packing_approx}, for some $c$ which we will fix later, to construct vertical shelves. More precisely, we rotate each item by an angle of $\pi/2$, apply the algorithm to the rotated instance, and then rotate the whole packing back by $\pi/2$. Let $P$ be the so obtained packing and let $S_1, \dots, S_k$ be the vertical shelves of this packing.

We now stack $S_1, \dots, S_k$ horizontally into the (vertical) strip. That is, we pack the $\{S_i\}_{i \in [k]}$ into horizontal shelves $T_1, \dots, T_l$ themselves. Note that since each shelf $S_i$, where $i \in [k]$, has the same height, the problem reduces to $(1D)$-\textsc{Bin Packing}. In particular, stacking the $\{S_i\}_{i \in [k]}$ greedily, each except for possibly the last horizontal shelf is covered by at least half by the $S_i$. If the last shelf $T_l$ is covered by half, we need at most $l \leq \frac{2\area(P)}{(c+2)h_{\max}(I)}$ horizontal shelves to pack $S_1, \dots, S_k$. Otherwise, we need at most $l \leq \frac{2\area(P)}{(c+2)h_{\max}(I)}+1$ shelves. However, in this case we can reduce the height of $T_l$, which is filled less than half by the $\{S_i\}_{i \in [k]}$, to $\frac{1}{2}(c+3)h_{\max}(I)$ by using Lemma \ref{split_shelf} on every shelf $S$ in $T_l$.

In particular, for such packing $P'$ it holds that
\begin{align*}
\height(P') &\leq \bigg( \frac{2\area(P)}{(c+2)h_{\max}(I)} \bigg) (c+2)h_{\max}(I) + \frac{1}{2}(c+3) h_{\max}(I) \\
&= 2\area(P) + \frac{1}{2}(c+3)h_{\max}(I) \\
&\leq \bigg ( 4\frac{c+2}{c}\frac{m+1}{m}+\frac{5}{2}c + \frac{11}{2} \bigg) \opt(I) \\
& = 21.\overline{8} \opt(I),
\end{align*}
which is attained for $c = 3$.
\end{proof}


\section{Efficient Approximation Algorithms for \textsc{Polygon Bin Packing} for Polygons of Width Upper Bounded by $1/M$}
\label{section_polygon_1/M}

In this section, we present polynomial time approximation algorithms for \textsc{Polygon Bin Packing} for the case when polygons have their width bounded by a fraction of the form $\frac{1}{M}$ and also improved approximations if their height is bounded as well.

To the best of our knowledge, the only presently published polynomial time approximation algorithm for \textsc{Polygon Bin Packing} is an $11$-approximation in the case when the diameter of the input polygons is bounded by $\frac{1}{10}$, see \cite{aamand_23}. For this case, we obtain an efficient $5.09$-approximation.

We prove the following theorem.

\begin{theorem}
\label{polygon_bin_packing_1/M}
Let $I \in \polygons$ and assume that there is some $M \in \N_{\geq 2}$ with $w_{\max}(I) \leq \frac{1}{M}$. Then one can pack $I$ into
\[
\begin{cases}
32 \opt(I) + 5 & \text{if $M=2$} \\
\frac{4M(M-1)}{(M-2)^2}\opt(I) + 3 & \text{if $M \geq 3$}
\end{cases}
\]
bins efficiently. If additionally $h_{\max}(I) \leq \frac{1}{M}$, then we can even efficiently pack $I$ into
\[
\begin{cases}
24 \opt(I) + 3 & \text{if $M=2$} \\
\frac{2(M+1)(M-1)}{(M-2)^2}\opt(I)+2 & \text{if $M \geq 3$}
\end{cases}
\]
bins.
\end{theorem}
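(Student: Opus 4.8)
The plan is to reduce \textsc{Polygon Bin Packing} to the two-dimensional \textsc{Strip Packing} machinery already built up in the excerpt, exactly in the spirit of how Theorem~\ref{polygon_packing_approx} leverages Theorem~\ref{thm:FFDH}, and then chop the resulting strip into unit bins. First I would invoke Lemma~\ref{polygon_into_parallelogram} to replace each polygon $p \in I$ by a bounding $x$-parallelogram $q$, obtaining an instance $I_Q$ with $\height(q)=\height(p)$, $\base(q),\wside(q)\le \width(p)\le \tfrac1M$, and $\area(I_Q)\le 2\area(I)$; and then replace each $q$ by the axis-parallel rectangle $r=(\base(q),\height(q))$ to get $I_R \in \rectangles$ with $w_{\max}(I_R)\le \tfrac1M$ and $\area(I_R)\le 2\area(I)$. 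The crucial extra ingredient in the bin-packing setting, as opposed to area or strip, is that $\opt(I)$ is a \emph{count} of unit bins, so I would record the two trivial lower bounds $\opt(I)\ge \area(I)$ (areas add up, each bin has area $1$) and $\opt(I)\ge h_{\max}(I)$ (a polygon of height $h$ needs a bin of height $\ge h$, hence at least $\lceil h\rceil$ bins, but more importantly the strip/shelf height it induces is at least $h_{\max}(I)$).

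Next I would run FFDH on $I_R$ into a strip of width $\tfrac{M-2}{M}$ (not width $1$): by Theorem~\ref{thm:FFDH}, since $w_{\max}(I_R)\le \tfrac1M = \tfrac1{M-2}\cdot\tfrac{M-2}{M}$, the packing $P_R$ into this narrower strip has height at most $h_{\max}(I_R) + \big(1+\tfrac1{M-2}\big)\tfrac{\area(I_R)}{(M-2)/M} = h_{\max}(I) + \tfrac{M-1}{M-2}\cdot\tfrac{M}{M-2}\area(I_R) \le h_{\max}(I) + \tfrac{2M(M-1)}{(M-2)^2}\area(I)$. Then, exactly as in the proof of Theorem~\ref{polygon_packing_approx}, each shelf of width $\tfrac{M-2}{M}$ in $P_R$ can be converted into a shelf for the corresponding $x$-parallelograms by sorting them by decreasing angle; the overlap on each side is at most $\wside\le \tfrac1M$, so the parallelogram shelf has width at most $\tfrac{M-2}{M}+\tfrac2M = 1$, i.e.\ it fits in a unit-width strip. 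Packing the polygons back into their parallelograms yields a packing $P$ of $I$ into a strip of width $1$ and height $H:=h_{\max}(I)+\tfrac{2M(M-1)}{(M-2)^2}\area(I)$. Finally, I would cut this strip of width $1$ into horizontal slabs of height $1$, one per unit bin; using the ``guillotine with overflow'' argument I'd get at most $\lceil H\rceil\le H+1 \le \big(1+\tfrac{2M(M-1)}{(M-2)^2}\big)\opt(I)+1$ bins, but that is off by roughly a factor of $2$ from the claimed $\tfrac{4M(M-1)}{(M-2)^2}$, which tells me the cutting must be done more carefully: items straddling a slab boundary have to be re-packed. The standard fix (à la the NFDH/shelf-to-bin reductions, e.g.\ the Coffman et al.\ strip-to-bin passage cited in the footnote) is to close a bin once its used height exceeds $\tfrac12$, so that all but the last bin is at least half full, giving at most $2H+1 = 2h_{\max}(I)+\tfrac{4M(M-1)}{(M-2)^2}\area(I)+1$ bins; then absorbing $2h_{\max}(I)\le 2\opt(I)$ into the leading term (or handling it via one or two extra bins of height $1$ stacked with the tall-but-thin leftover shelf) lands on $\tfrac{4M(M-1)}{(M-2)^2}\opt(I)+3$. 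For $M=2$ the denominator $(M-2)^2$ vanishes, so there I would instead use the NFDH bound (Theorem~\ref{nfdh}) into a strip of some width $w<1$ chosen to leave room $\ge \tfrac12 w_{\max}$ on each side for the parallelogram overlap — say width $\tfrac12$ — yielding $\height\le h_{\max}+2\cdot\tfrac{\area(I_R)}{1/2}=h_{\max}+4\area(I_R)\le h_{\max}+8\area(I)$, converting to parallelogram shelves of width $\tfrac12+2\cdot\tfrac12=\tfrac32$, hmm — one needs $w+2w_{\max}\le 1$, i.e.\ $w\le 1-\tfrac2M$, which for $M=2$ is $w\le 0$, impossible. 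So for $M=2$ one must split every shelf first with Lemma~\ref{split_shelf}-type halving so that each half has width $\le\tfrac12+\tfrac12 w_{\max}\le \tfrac34\le 1$; this doubles the shelf count and explains the worse constant $32$ (roughly $2\times 2 \times$ the NFDH factor $2$, times the area blow-up $2$).

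For the second, stronger bound I would additionally use the hypothesis $h_{\max}(I)\le \tfrac1M$. Now the shelves themselves are short, so when cutting the width-$1$ strip of height $H$ into unit bins one can stack $\lfloor M\rfloor$ (or $M-?$) full shelves of height $\le\tfrac1M$ into a single bin of height $1$, recovering the lost factor of $2$ from the half-full argument. Concretely, with FFDH giving $H\le h_{\max}(I)+\tfrac{2M(M-1)}{(M-2)^2}\area(I)$ and shelves of height $\le\tfrac1M$, greedily packing shelves into bins of height $1$ uses at most $\lceil \tfrac{H}{1-1/M}\rceil$-ish bins since each bin except the last is filled to height $\ge 1-\tfrac1M=\tfrac{M-1}{M}$; that gives $\tfrac{M}{M-1}H+1 = \tfrac{M}{M-1}h_{\max}(I) + \tfrac{2M^2}{(M-2)^2}\area(I)+1$, and combining $\tfrac{M}{M-1}h_{\max}\le \tfrac{M}{M-1}\opt$ with the area term, after a small massage using $\area(I),h_{\max}(I)\le\opt(I)$, should consolidate to $\tfrac{2(M+1)(M-1)}{(M-2)^2}\opt(I)+2$. (I'd check whether one gets the $+2$ rather than $+3$ because the two sources of additive loss — the FFDH additive term and the bin-cutting overflow — can here be merged into a single extra bin.) For $M=2$ one again falls back on NFDH plus shelf-halving, with the short-shelf stacking now saving the factor that brings $32$ down to $24$.

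The main obstacle I anticipate is not any single inequality but getting the \emph{bookkeeping of additive constants and the width budget} exactly right: one has to simultaneously (a) choose the FFDH/NFDH strip width $w$ so that $w + 2 w_{\max}(I) \le 1$ leaves room for the $\le w_{\max}$ parallelogram overlap on each side — which is precisely why the width $\tfrac{M-2}{M}$ and hence the factor $\tfrac{M}{M-2}$ appears, and why $M=2$ is genuinely special and needs the halving trick of Lemma~\ref{split_shelf}; (b) track that FFDH's additive term is $h_{\max}(I_R)=h_{\max}(I)$, not something larger; and (c) argue the strip-to-bins conversion loses only a bounded additive number of bins, which requires the ``close a bin when half (resp.\ $\tfrac{M-1}{M}$) full'' argument and, in the bounded-height case, a clean statement that shelves of height $\le \tfrac1M$ stack $M-1$ (or so) to a bin. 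Threading these three together so the constants collapse to exactly $\tfrac{4M(M-1)}{(M-2)^2}+3$ and $\tfrac{2(M+1)(M-1)}{(M-2)^2}+2$ (and $32+5$, $24+3$ for $M=2$) is where the real care lies; the geometric content is entirely supplied by Lemma~\ref{polygon_into_parallelogram}, Theorem~\ref{thm:FFDH}, Theorem~\ref{nfdh}, and Lemma~\ref{split_shelf}.
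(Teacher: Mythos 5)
Your overall route for the first set of bounds is essentially the paper's: bound each polygon by an $x$-parallelogram via Lemma~\ref{polygon_into_parallelogram}, run FFDH on the base-rectangles in a strip narrowed so that the two side overlaps of at most $w_{\max}(I)$ still leave total width $\le 1$ (the paper realizes this as the choice $c=\frac{1}{w_{\max}(I)}-2$ in Theorem~\ref{polygon_packing_approx}, giving strip width $1-2w_{\max}(I)\ge\frac{M-2}{M}$), and then distribute the shelf heights into unit bins; your $M\ge 3$ computation reproduces the constants $\frac{4M(M-1)}{(M-2)^2}\opt(I)+3$. For $M=2$ you deviate from the paper, which splits $I$ into left-tilted and right-tilted parallelograms and packs each side with $c=\frac{1}{w_{\max}(I)}-1$ so that the overlap occurs on one side only; your alternative of halving each shelf with Lemma~\ref{split_shelf} is a genuinely different but workable route that also lands on $32\opt(I)+5$.

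The genuine gap is in the second half of the theorem, where $h_{\max}(I)\le\frac{1}{M}$. Your strip-to-bins step there is a greedy/NextFit argument (``each closed bin is filled to height at least $1-\frac1M$''), which gives at most $\frac{M}{M-1}\height(P)+1$ bins. With $\height(P)\le \frac{2M(M-1)}{(M-2)^2}\area(I)+h_{\max}(I)$ this produces the leading constant $\frac{2M^2}{(M-2)^2}$, which is \emph{strictly larger} than the claimed $\frac{2(M+1)(M-1)}{(M-2)^2}$ (since $M^2>(M+1)(M-1)=M^2-1$), and no massage using $\area(I),h_{\max}(I)\le\opt(I)$ can shrink a too-large multiplicative constant; worse, for $M=2$ your factor $\frac{M}{M-1}=2$ coincides with the plain NF factor, so you obtain $32$, not $24$. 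The missing ingredient is the one the paper invokes from Section~\ref{section:shelf_packing}: FirstFit packs $1$D items of size at most $\frac1M$ into at most $1+\bigl(1+\frac1M\bigr)\sum_i s_i$ bins. Distributing the shelf heights (each $\le h_{\max}(I)\le\frac1M$) with FF rather than greedily gives $\bigl(1+\frac1M\bigr)\height(P)+1$ bins, and $\bigl(1+\frac1M\bigr)\frac{2M(M-1)}{(M-2)^2}=\frac{2(M+1)(M-1)}{(M-2)^2}$ with the $h_{\max}$ term absorbed into the additive $+2$; for $M=2$ the factor $\frac32$ turns $16\area(I)+2h_{\max}(I)$ into $24\area(I)+3h_{\max}(I)$, i.e.\ $24\opt(I)+3$. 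In short, the improvement in the bounded-height case is not a geometric fact about stacking $M$ short shelves per bin but exactly this FF accounting; without it the second display of the theorem is not established.
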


In particular, for inputs where both width and height of all polygons are upper bounded by $\frac{1}{10}$, we get a $5.09$-approximation.

\begin{proof}
Assume first that $M \geq 3$. We use the algorithm from Theorem \ref{polygon_packing_approx} for $c=\frac{1}{w_{\max}(I)}-2 \geq M-2$ to obtain a packing $P$ with shelves $S_1, \dots, S_k$. The packing $P$ satisfies
\[
\width(P) \leq (c+2)w_{\max}(I) = 1,
\]
and hence a shelf $S_i$, $i \in [k]$, fits into a unit bin. Note that $m = \lfloor c \rfloor \geq M-2$, since $M-2$ is an integer. Since $cw_{\max}(I) = 1-2w_{\max}(I) \geq \frac{M-2}{M}$, it holds that
\[
\height(P) \leq 2\bigg(1+\frac{1}{m}\bigg)\frac{\area(I)}{cw_{\max}(I)} + h_{\max}(I) \leq 2\bigg(1+\frac{1}{M-2}\bigg)\frac{\area(I)}{(M-2)/M} + h_{\max}(I).
\]
We can now use NF to distribute the shelves into unit bins. Since $h_{max} = 1$, this uses at most $2\height(P)+1$ bins, see Section \ref{section:shelf_packing}. But
\[
2 \height(P) + 1 \leq 4\bigg(1+\frac{1}{M-2}\bigg) \frac{\area(I)}{(M-2)/M} + 2h_{\max}(I) + 1 \leq \frac{4M(M-1)}{(M-2)^2}\opt(I) + 3.
\]

If $h_{\max}(I) \leq \frac{1}{M}$ as well, we can use FF instead of NF to distribute the shelves into bins, which needs at most $(1+\frac{1}{M})\height(P)+1$ bins, see again Section \ref{section:shelf_packing}. This way, the number of needed bins is bounded by
\[
\bigg(1+\frac{1}{M}\bigg) \height(P) + 1 \leq \frac{2(M+1)(M-1)}{(M-2)^2}\opt(I)+2,
\]
as claimed.

Now we consider the case when $M=2$. Partition $I$ into two sets $I_L$ and $I_R$, where a polygon $p \in I$ belongs to $I_L$ if its chosen $x$-parallelogram in the proof of Theorem \ref{polygon_packing_approx} is tilted to the left and to $I_R$ if it is tilted to the right. We now proceed with the algorithm in the proof of Theorem \ref{polygon_packing_approx} for $c=\frac{1}{w_{max}(I)} - 1 \geq 1$, but for $I_L$ and $I_R$ separately. Call the obtained packings $P_L$ and $P_R$, respectively. Note that since all are tilted to one side only,
\[
\width(P_L), \, \width(P_R) \leq (c+1)w_{\max}(I) = 1
\]
and, using $m=\lvert c \rvert \geq 1$ and $cw_{\max}(I) = 1-w_{\max}(I) \geq \frac{1}{2}$,
\[
\height(P_L) \leq 2\bigg(1+\frac{1}{m}\bigg)\frac{\area(I_L)}{cw_{\max}(I_L)} + h_{\max}(I_L) \leq 8\area(I_L) + h_{\max}(I),
\]
as well as analogously $\height(P_R) \leq 8 \area(I_R) + h_{\max}(I)$.

Hence, for the packing $P$, where $P_R$ is stacked on top of $P_L$, it holds that
\[
\width(P) \leq 1 \quad \text{and} \quad \height(P) \leq 16 \area(I) + 2 h_{\max}(I).
\]
From here, with the same arguments as for $M \geq 3$, we obtain the desired bounds.
\end{proof}

\section{Efficient Approximations for \textsc{Polygon Bin Packing} for Polygons with bounded Width or Height and for Instances, where each Polygon has a Spine from a Set of Constant Size}
\label{section_polygon_bin_pack}

In this last section, we give further results on \textsc{Polygon Bin Packing}, attempting to further close the gap towards the question of an efficient $\mathcal{O}(1)$-approximation.

\subsection{An Efficient $\mathcal{O}(\frac{1}{\delta})$-Approximation for \textsc{Polygon Bin Packing} for Polygons with Width or Height at Most $1-\delta$}
\label{section_polygon_1-delta}

We now also consider wider polygons than in Section~\ref{section_polygon_1/M} and prove the following result.

\begin{theorem}
\label{polygon_bin_packing_1_minus_delta}
There is an efficient $\mathcal{O}(\frac{1}{\delta})$-approximation algorithm for \textsc{Polygon Bin Packing} for instances $I \in \polygons$, where each $p \in I$ satisfies $\width(p) \leq 1-\delta$ or $\height(p) \leq 1- \delta$.
\end{theorem}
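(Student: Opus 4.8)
The plan is to split the input $I$ into two groups according to which side is short, and handle each group with one of the bin-packing results we already have. Specifically, let $I_w := \{p \in I : \width(p) \le 1-\delta\}$ and $I_h := I \setminus I_w$, so that every $p \in I_h$ has $\height(p) \le 1-\delta$ (and $\width(p) > 1-\delta$, though we will not need that). Since $\opt(I_w), \opt(I_h) \le \opt(I)$, it suffices to pack each group into $\mathcal{O}(\tfrac{1}{\delta})\opt(I) + \mathcal{O}(1)$ bins; by symmetry (rotating by $\pi/2$) it is enough to treat $I_w$, i.e. the case where every polygon has width at most $1-\delta$.

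For $I_w$, the idea is to mimic the proof of Theorem~\ref{polygon_bin_packing_1/M}, but with the non-integral ``slot count'' $M := \tfrac{1}{1-\delta} > 1$ playing the role of the integer bound. Run the algorithm of Theorem~\ref{polygon_packing_approx} with the free parameter chosen as $c = \tfrac{1}{w_{\max}(I_w)} - 2 \ge \tfrac{1}{1-\delta} - 2$; this makes $\width(P) \le (c+2)w_{\max}(I_w) = 1$, so every shelf of the resulting packing $P$ fits horizontally into a unit bin. The subtlety compared with Theorem~\ref{polygon_bin_packing_1/M} is that here $\tfrac{1}{1-\delta}-2$ may be negative (when $\delta < \tfrac12$), so $c \ge 1$ need not hold and we cannot invoke Theorem~\ref{thm:FFDH}/Theorem~\ref{polygon_packing_approx} directly with this $c$. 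I would circumvent this by instead first splitting $I_w$ into left-leaning and right-leaning polygons (as in the $M=2$ case of Theorem~\ref{polygon_bin_packing_1/M}): for polygons all tilted to one side, a shelf of parallelograms ordered by angle has its overlaps accumulating on only \emph{one} side, so the shelf width is $(c+1)w_{\max}$ rather than $(c+2)w_{\max}$, and we may take $c = \tfrac{1}{w_{\max}(I_w)} - 1 \ge \tfrac{1}{1-\delta} - 1 \ge \tfrac{\delta}{1-\delta}$, which is at least $\tfrac{\delta}{1-\delta} > 0$ but still possibly less than $1$. To get genuinely $c \ge 1$ I would, if necessary, further sub-divide by a constant number of angle-windows — but the cleanest route is: observe $cw_{\max}(I_w) = 1 - w_{\max}(I_w) \ge \delta$, so even with $c$ small one still controls the strip width from below by $\delta$, and run FFDH into the strip of width $cw_{\max}(I_w)$; this yields, via Theorem~\ref{thm:FFDH} applied with $m=1$ as a lower bound (valid since FFDH with $w_{\max}\le$ strip-width always achieves $\height \le h_{\max} + 2\,\area/\text{(strip width)}$ up to the additive shelf), the bound
\begin{equation*}
\height(P) \le 2\,\frac{\area(I_w)}{\delta} + h_{\max}(I_w).
\end{equation*}
Then distribute the $k$ unit-width shelves into unit bins by \textsc{NextFit} on their heights, using at most $2\height(P) + 1 \le \tfrac{4}{\delta}\area(I_w) + 2h_{\max}(I_w) + 1$ bins; since $\area(I_w) \le \opt(I)$ and $h_{\max}(I_w) \le 1 \le \opt(I)$, this is $\mathcal{O}(\tfrac{1}{\delta})\opt(I) + \mathcal{O}(1)$.

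Putting the two groups together gives at most $\mathcal{O}(\tfrac{1}{\delta})\opt(I) + \mathcal{O}(1)$ bins in total, which is an $\mathcal{O}(\tfrac{1}{\delta})$-approximation (the additive constant is absorbed into the multiplicative factor since $\opt(I) \ge 1$ whenever $I \neq \emptyset$). All steps are polynomial-time: the split is linear, the parallelogram packing and FFDH are as in Theorem~\ref{polygon_packing_approx}, and \textsc{NextFit} on the shelves is linear. I expect the main obstacle to be the bookkeeping around the parameter $c$ when $\delta$ is small: one must be careful that the shelf-width blow-up from angle-ordering (the ``$+1$'' or ``$+2$'' times $w_{\max}$'') does not destroy the $\width(P) \le 1$ property, which is exactly why the left/right-leaning split — or an equivalent bounded partition by spine angle — is needed before applying the strip-packing machinery, and why the honest statement of the FFDH guarantee for possibly-small strip width (rather than a literal citation of Theorem~\ref{thm:FFDH}, which assumes $w_{\max} \le 1/m$ with integer $m \ge 2$) has to be spelled out.
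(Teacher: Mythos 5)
There is a genuine gap, and it sits exactly at the hard case of this theorem. Your plan reduces (correctly, by splitting and rotating) to the case where every polygon has $\width(p)\le 1-\delta$, and then runs the machinery of Theorem~\ref{polygon_packing_approx} with $c$ chosen so that the shelf width is $1$, i.e.\ the FFDH/NFDH strip has width $cw_{\max}(I_w)=1-w_{\max}(I_w)$, which you only bound from below by $\delta$. But the rectangles you feed into that strip have widths $\base(q)$, which can be as large as $w_{\max}(I_w)\le 1-\delta$; when $\delta<\tfrac12$ and $w_{\max}(I_w)>\tfrac12$, these rectangles simply do not fit into a strip of width $1-w_{\max}(I_w)$, so FFDH cannot place them and the bound $\height(P)\le h_{\max}+2\,\area/\delta$ (which presupposes $w_{\max}(I_R)\le{}$strip width) is invalid. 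The left/right-leaning split, or any bounded partition by spine angle, does not repair this: the obstruction is the size of the parallelogram \emph{bases}, not the tilt direction or the one-versus-two-sided overhang. In effect your argument only covers polygons of width at most $\tfrac12$ (which is Theorem~\ref{polygon_bin_packing_1/M} with $M=2$), while the polygons with width in $(\tfrac12,\,1-\delta]$ — the whole reason the theorem is nontrivial — are left unhandled.

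The paper closes exactly this case with two extra reductions that your proposal is missing. First, polygons with $\area(p)\ge\tfrac{\delta}{4}$ are packed one per bin; since any unit bin holds at most $\tfrac{4}{\delta}$ such polygons, this costs only $\mathcal{O}(\tfrac1\delta)\opt(I)$ bins. Second, for the remaining polygons one may assume $\width(p)\le\height(p)$ (so $\height(p)\ge\width(p)\ge\tfrac12$) and $\area(p)<\tfrac{\delta}{4}$, which forces the bounding $x$-parallelogram of Lemma~\ref{polygon_into_parallelogram} to have
\[
\base(q)=\frac{\area(q)}{\height(q)}\le\frac{2\area(p)}{\height(p)}\le\delta .
\]
Only now do the rectangles $(\base(q),\height(q))$ genuinely fit into a strip of width $\delta$, and after sorting a shelf's parallelograms (all tilted the same way) by angle, the one-sided slant overhang of at most $w_{\max}\le 1-\delta$ brings the shelf width to at most $\delta+(1-\delta)=1$. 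Without the area-threshold step and the resulting base bound, your strip-packing step has no valid guarantee, so the proof as proposed does not go through.
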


The result leaves the question whether \textsc{Polygon Bin Packing} admits a $\mathcal{O}(1)$-approximation algorithm open. To obtain an affirmative proof, one is left to provide, for any fixed $\delta>0$, an $\mathcal{O}(1)$-approximation algorithm for polygons $p$ that satisfy both $\width(p) \geq 1-\delta$ and $\height(p) \geq 1-\delta$. We will use and see in the proof, that additionally, one can easily pack polygons with area bigger than some fixed threshold $\eta > 0$. Thus, only polygons that are wide, high and have small area (see Figure \ref{image_problematic_polygon}) are problematic for our described methods. We will present some ideas to deal with such polygons in the coming, last section.

\begin{figure}[h]
\centering
  \includegraphics[width=.5\linewidth]{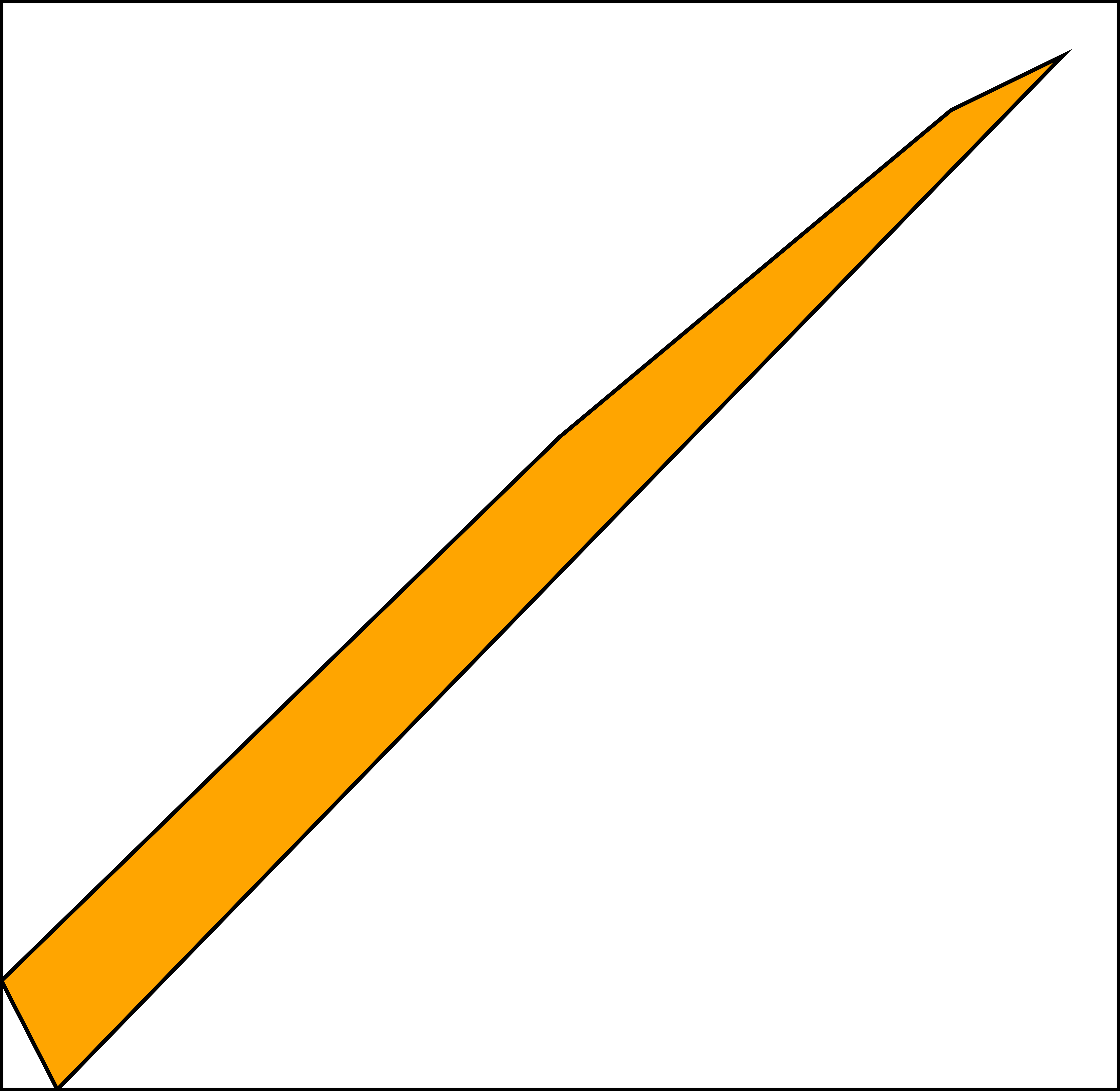}
  \caption{A ``problematic'' polygon for \textsc{Polygon Bin Packing}, which is both high and wide but has only small area.}
  \label{image_problematic_polygon}
\end{figure}

\begin{proof}
Throughout the proof we split $I$ into a constant number of sub-instances $I=\bigsqcup_i I_i$ and consider such sub-instances separately. We use without further notice that it is sufficient to show that is possible to pack each sub-instance $I_i$ into at most $\mathcal{O}(\frac{1}{\delta})\opt(I)$ bins in order to obtain an overall packing for $I$ into at most $\mathcal{O}(\frac{1}{\delta})\opt(I)$ bins.

Without loss of generality, we can assume that $\width(p) \leq \height(p)$ for all $p \in I$. All $p \in I$ with $\width(p) > \height(p)$ can be packed analogously after rotation by $\pi/2$ and the packings into unit bins can in the end be rotated back. Note that making this assumption in particular implies that $\width(p) \leq 1-\delta$ for all $p\in I$.

All polygons with $\width(p) \leq \frac{1}{2}$ can be packed with Theorem \ref{polygon_bin_packing_1/M}. So we can assume that $\width(p) \geq \frac{1}{2}$ for all $p \in I$.

Finally, we can also assume that $\area(p) < \frac{\delta}{4}$ for all $p \in I$. Indeed, all polygons $p$ with $\area(p) \geq \frac{\delta}{4}$ can be packed one by one into separate bins which needs no more than $\mathcal{O}(\frac{1}{\delta})$ bins.

Let $I_Q$ be the input consisting of $x$-parallelograms constructed from $I$ with Lemma \ref{polygon_into_parallelogram}. Assume that all of them are tilted to the right and pack those that are not tilted to the right analogously. Note that for each $q \in I_Q$,
\[
\base(q) = \frac{\area(q)}{\height(q)} \leq \frac{2\area(p)}{\height(p)} \leq \delta,
\]
where we used that $\area(p) \leq \frac{\delta}{4}$ and $\height(p)\geq \frac{1}{2}$.
Let $I_R \in \rectangles$ be the instance that for each $q \in I_Q$ contains a rectangle $r=(\base(q),\height(q))$. We now pack $I_R$ with NFDH into a strip of width $\delta$. Such strip has height at most $2 \area(I_R)/\delta + h_{\max}$ by Theorem \ref{nfdh}. Hence, using NF, we can distribute the shelves from this packing, which needs at most $4 \area(I_R)/\delta + 3$ bins of width $\delta$ and height $1$ to pack the rectangles into.

After ordering the parallelograms corresponding to a shelf in the packing of the rectangles by their angle, they can be packed into a shelf of width $1$.

Therefore, $I_Q$ can be packed into
\[
\frac{4 \area(I_R)}{\delta}+3 \leq \frac{8 \area(I)}{\delta}+3
\]
unit bins. Hence, the described method packs $I$ into at most
\[
\frac{8}{\delta}\opt(I) + 3 \leq \bigg (\frac{8}{\delta} + 3 \bigg ) \opt(I)
\]
bins.
\end{proof}

\subsection{An Efficient $\mathcal{O}(C)$-Approximation for \textsc{Polygon Bin Packing} when each Polygon has a Spine from a Set of Constant Size $C$}
\label{section_polygon_same_spine}

In this section, we show that it is possible to pack an instance $I \in \polygons$, for which all polygons $p \in I$ have a spine $s$ from a set $S$ of constant size $C \in \N$, efficiently into at most $\mathcal{O}(C) \opt(I)$ bins efficiently.

First, we need to introduce some notation.

\begin{definition}
Let $p \subset [0,1]^2$ be a polygon. We define
\[
\eta(p) := \max\{\width([x_1,x_2] \times \{y\}) \, | \, [x_1,x_2] \times \{y\} \subset p \}.
\]
If there is a spine $s$ fixed for $p$, we denote by $p^1$ the part of $p$ that lies on the left of $s$ and by $p^2$ the part of $p$ that lies on the right of $s$. In this case, we also set
\[
\eta_1(p) := \eta(p^1) \quad \text{and} \quad \eta_2(p) := \eta(p^2).
\]
\end{definition}

Note that $\max\{\eta_1(p),\eta_2(p)\} \leq \eta(p) \leq \eta_1(p)+\eta_2(p)$ for every polygon $p \in \polygons$.

We prove the following result.

\begin{theorem}
\label{thm_equal_spine}
There is an efficient $\mathcal{O}(1)$-approximation algorithm for \textsc{Polygon Bin Packing} for instances $I \in \polygons$ with the property that all polygons share a spine $s$ (up to translation) with height $H = \height(s) \geq \frac{3}{4}$.
\end{theorem}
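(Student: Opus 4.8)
The plan is to reduce the shared-spine case to the setting where all polygons are (translates of) $x$-parallelograms, where we can already pack efficiently, plus a controlled remainder. Since every $p \in I$ has the same spine $s$ (up to translation) of height $H = \height(s) \ge \frac34$, and a spine connects a lowest to a highest point of $p$, we have $\height(p) = H \ge \frac34$ for all $p$. Fix the orientation of $s$; WLOG $s$ leans to the right. Write $p = p^1 \sqcup p^2$ as in the definition, with $p^1$ the part left of $s$ and $p^2$ the part right of $s$. Because all the spines are translates of one another, the left parts $p^1$ and right parts $p^2$ all ``hang off'' a common slanted segment; the idea is to bound each side independently in terms of $\eta_1(p)$ and $\eta_2(p)$.

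First I would handle the easy polygons: any $p$ with $\area(p)$ above a fixed constant threshold (here a constant, since $\height(p)\ge\frac34$ already bounds $\width(p)$ from below once area is bounded below, but more simply just using area) gets its own bin, costing $\mathcal{O}(\opt(I))$ bins total, exactly as in the proof of Theorem~\ref{polygon_bin_packing_1_minus_delta}. So assume all polygons have small area, which forces $\eta(p) = \eta_1(p) + \eta_2(p)$ to be small (at most some constant times $\area(p)/H \le \frac43 \area(p)$, using that a horizontal chord of width $\eta(p)$ sitting inside $p$ at the ``widest'' level, combined with convexity and the full height $H$, gives a triangle of area at least $\tfrac12 \eta(p) \cdot (\text{something proportional to } H)$ inside $p$ — I would make this precise with a convexity/triangle argument). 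Now split $I$ into $I^1$ and $I^2$ conceptually: build from each $p$ an $x$-parallelogram-like bounding region on the left whose two slanted sides are translates of $s$ and whose base has length $\eta_1(p)$, and similarly on the right with base $\eta_2(p)$. Concretely, bound $p$ by a single $x$-parallelogram $q$ via Lemma~\ref{polygon_into_parallelogram}, but exploit the common spine: since all the $q$'s have the same slanted side $s$ (same angle, same width $\wside = H/\sin(\text{angle})$... actually $\wside(q)$ equals the horizontal width of $s$), after ordering by angle is vacuous — all angles are equal — the overlap when abutting consecutive parallelograms base-to-base is exactly controlled, and in fact we can nest/stack them sharing the slanted edge.

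The key computation: pack the rectangles $r = (\base(q), \height(q)) = (\eta_1(p)+\eta_2(p), H)$ — wait, more carefully $\base(q) \le \width(p)$ which need not be small; instead I would use the refined bound $\base(q) \le \eta_1(p)+\eta_2(p) \le \frac43\area(p)$ available in this shared-spine setting because the whole polygon's horizontal extent at any level is at most $\eta(p)$. Then NFDH packs these thin rectangles into a strip of width $1$ of height $\le h_{\max} + 2\area(I_R) \le H + 2\cdot\frac43\area(I) \le 1 + \frac83\opt(I)$, and since all the parallelograms in one shelf share the spine direction, abutting them base-to-base with the common slanted side means the horizontal overlap beyond the strip is at most $\wside(q) = $ (horizontal width of $s$) $\le 1$, so each shelf widens to width $\le 2$; split each such width-$2$ shelf into two width-$1$ shelves using Lemma~\ref{split_shelf}, doubling the shelf count. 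Distributing shelves of height $H \ge \frac34$ into unit bins via NextFit uses $\mathcal{O}(\height(P)) + \mathcal{O}(1)$ bins (at most two shelves of height $<1$... actually since $H \ge \frac12$ only one fits per bin, but that's still $\mathcal{O}(\height(P)/H) = \mathcal{O}(\height(P))$ bins). Combining, $I$ packs into $\mathcal{O}(\opt(I)) + \mathcal{O}(1)$ bins, and rescaling the additive constant into the multiplicative one (legitimate since $\opt(I)\ge 1$) gives the claimed $\mathcal{O}(1)$-approximation.

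The main obstacle I anticipate is establishing the bound $\eta_1(p) + \eta_2(p) = \mathcal{O}(\area(p))$ (equivalently that a polygon of full height $H \ge \frac34$ and small area must be horizontally thin \emph{everywhere}, not just on average) — this needs a clean convexity argument: take the horizontal chord realizing $\eta(p)$ at height $y_0$, and the two apex points at heights $0$ and $H$; convexity gives two triangles inside $p$ whose combined area is $\ge \tfrac12 \eta(p)\cdot\max(y_0, H-y_0) \ge \tfrac14 \eta(p) H \ge \tfrac{3}{16}\eta(p)$, so $\eta(p) \le \tfrac{16}{3}\area(p)$. The second delicate point is verifying that ``sharing a spine up to translation'' really does let consecutive parallelograms in a shelf abut with overlap bounded by a single $\wside$ rather than $2\wside$ — this should follow because every parallelogram has \emph{the same} slanted sides (translates of $s$), so placing them base-to-base along a line, each subsequent one's left slanted side coincides in direction with the previous one's right slanted side, and the total horizontal footprint is (sum of bases) $+ \wside(s)$, exactly one extra $\wside$ regardless of how many parallelograms there are. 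Everything else is bookkeeping analogous to Theorems~\ref{polygon_packing_approx} and~\ref{polygon_bin_packing_1_minus_delta}.
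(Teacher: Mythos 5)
There is a genuine gap, and it sits exactly where the paper's proof does its real work: the case where the common spine is wide, i.e.\ $\width(s)$ close to $1$. Your preprocessing is fine (discarding large-area polygons, the triangle argument giving $\eta(p)=\mathcal{O}(\area(p))$, and the bound $\base(q)\leq \eta_1(p)+\eta_2(p)$ all check out), but the shelf construction that follows does not produce a feasible packing, and no repair of it can be charged against $\opt(I)$ with the bounds you have. Concretely: every bounding parallelogram has $\wside(q)=\width(s)$, so a shelf whose bases were packed into a width-$1$ strip has footprint up to $1+\width(s)\leq 2$, and Lemma~\ref{split_shelf} does not turn this into width-$1$ shelves; splitting at the midpoint leaves each half with width up to $\tfrac{1+\width(s)}{2}+\tfrac{w_{\max}(I)}{2}$, which is about $1.5$ when $\width(s)$ and $w_{\max}(I)$ are near $1$ (and that lemma is anyway stated only for shelves produced by the algorithm of Theorem~\ref{polygon_packing_approx} at width $(c+2)w_{\max}(\overline{I})$); repeated splitting never gets below $w_{\max}(I)$. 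The only way to make your shelves fit unit bins is to pack the bases into a strip of width roughly $1-\width(s)$, and then the number of shelves, hence bins, scales like $\area(I_R)/(1-\width(s))$, while your only lower bounds on $\opt(I)$ are $\area(I)$ and the trivial per-polygon count — giving ratio $\Theta\bigl(1/(1-\width(s))\bigr)$, which is unbounded since the theorem places no restriction on $\width(s)$.

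This loss is real, not an artifact of the analysis: take $n$ thin slivers hugging a spine with $H=\tfrac34$ and $\width(s)=1-\delta$, each with $\eta(p)=\varepsilon\ll\delta$. On a common shelf baseline two such slivers (same slant) need horizontal offset at least their base, so any shelf packing fits only about $\delta/\varepsilon$ of them per unit bin; an optimal packing instead staggers them vertically, sliding each next sliver down along the left wall and then along the bottom, and fits $\Theta(1/\varepsilon)$ per bin. This vertical staggering, enabled by the slack $1-H$, is precisely what the paper's proof captures and your proposal never addresses: the paper (a) gives a staircase construction showing that any group with $\sum_p 2\eta(p)\leq x_{\max}-x_{\min}=1-\tfrac{2H-1}{H}\width(s)$ fits into a single unit bin, and (b) proves the matching structural claim that in any bin of an optimal packing the $x$-coordinates where the spines cross the mid-height line satisfy $x_{j+1}-x_j\geq \tfrac{3H-2}{2(2H-1)}\bigl(\eta_2(p_j)+\eta_1(p_{j+1})\bigr)$, which (after removing the first and last polygon of each bin, handled by a guessing step and a $1$D bin-packing reduction) lower bounds $\opt(I)$ in terms of $\sum_p\eta(p)$ against the same per-bin budget. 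Without both the staircase packing and such a lower bound on $\opt(I)$, a shelf-based argument cannot yield the claimed $\mathcal{O}(1)$ factor.
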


\begin{remark}
The factor $\frac{3}{4}$ in the statement is chosen arbitrarily in $(\frac{2}{3},1]$ and only affects the exact approximation ratio. When the factor approaches $\frac{2}{3}$, the approximation guarantee approaches $\infty$.
\end{remark}

\begin{proof}
We assume that $s$ is tilted to the right. As the polygons share a spine $s$, they all have the same height $H$.

Throughout the proof, we write  $M_y:=[0,1] \times \{y\}$ for $y \in [0,1]$. Furthermore, fix some translation $\phi_L \in \Phi$ so that $s_L := \phi_L(s)$ is in the upper left corner of the bin (so that it touches the top and the left side of the bin). Similarly, fix a translation $\phi_R \in \Phi$ so that $s_R := \phi_R(s)$ is in the lower right corner of the bin (so that it touches the bottom and the right side of the bin). We define $x_{\min}$ as the $x$-coordinate of the intersection point of $s_L$ with $M_{1/2}$. Similarly, we define $x_{\max}$ as the $x$-coordinate of the intersection point of $s_R$ with $M_{1/2}$. The two numbers can be explicitly calculated to be $x_{\min} = (1- \frac{1}{2H})\width(s)$ and $x_{\max} = 1-(1- \frac{1}{2H})\width(s)$.

The idea is to give a structural result of an optimal multi-packing $\mathcal{P}_{\opt}$ of the problem. We will then construct a multi-packing, which, using this structural result, can be shown to use at most $\mathcal{O}(1)\opt(I)$ bins.

Let $P$ be an arbitrary packing of a subset $I' \subset I$ into a unit bin. We write $N := \lvert I' \rvert$. Note that since $H > \frac{1}{2}$, all polygons in the packing have their spine crossing $M_{1/2}$ at an $x$-coordinate in $[x_{\min},x_{\max}]$. Let $x_1, \dots, x_{N}$ be the $x$-coordinates of those intersection points in increasing order and denote by $p_1, \dots, p_N$ the affiliated polygons.

\begin{claim}
For all $j \in [N-1]$, it holds that $x_{j+1}-x_j \geq \frac{3H-2}{2(2H-1)}(\eta_2(p_j) + \eta_1(p_{j+1}))$.
\end{claim}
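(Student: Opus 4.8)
The claim is a geometric separation lemma: two polygons packed in the same bin, both sharing the same spine $s$ (up to translation) tilted to the right with height $H$, whose spines cross the midline $M_{1/2}$ at $x$-coordinates $x_j < x_{j+1}$, cannot have these crossing points too close together, because the right part $p_j^2$ of the earlier polygon and the left part $p_{j+1}^1$ of the later one would overlap. So I would fix $j$, set $p := p_j$, $p' := p_{j+1}$, and work in the strip between the two translated copies of the spine, call them $s$ (through $(x_j,\tfrac12)$) and $s'$ (through $(x_{j+1},\tfrac12)$), which are parallel line segments of the same slope. The horizontal gap between $s$ and $s'$ at height $y$ is constant in $y$ (they are translates), equal to $x_{j+1}-x_j$; call this $d$.

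\textbf{Key steps.} First I would observe that since both polygons lie inside the unit bin $[0,1]^2$ and have height $H$, and their spines join their lowest to highest point, the spine of $p$ occupies $y$-range $[a, a+H]$ and similarly for $p'$; both ranges contain $[\tfrac12 - (H-\tfrac12), \tfrac12 + (H - \tfrac12)]$... more carefully, since $H \geq \tfrac34 > \tfrac12$, and each spine has its endpoints in $[0,1]^2$ with vertical extent $H$, the spine of $p$ must cross $M_{1/2}$, and the portion of the spine within, say, a band of heights around $\tfrac12$ of width $2H-1$ lies in the bin. The crucial quantity is how much horizontal room $p^2$ needs to the right of $s$ and $p'^1$ needs to the left of $s'$. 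I would argue that in the overlap of their vertical ranges — which, since both have height $H$ and sit in $[0,1]$, has length at least $2H - 1$ — the earlier polygon's right part and the later polygon's left part both reach out toward each other, and at some common height the sum of their horizontal protrusions, suitably scaled, forces $d \geq \frac{3H-2}{2(2H-1)}(\eta_2(p_j)+\eta_1(p_{j+1}))$.

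To make the scaling precise: $\eta_2(p)$ is the maximal horizontal width of a horizontal chord of $p^2$, attained at some height; but that height need not be in the shared band of width $2H-1$ centred appropriately. Here convexity enters. The polygon $p^2$ is convex, contained between the spine $s$ (vertical extent $H$) on its left; its widest horizontal chord has some width $\eta_2(p)$ at some height $y^*$, and the spine passes through the topmost and bottommost points of $p$. By convexity, the horizontal extent of $p^2$ to the right of $s$, as a function of height, is a concave function that is $0$ at the two endpoints of the spine's $y$-range and attains $\eta_2(p)$ somewhere in between; hence at the boundary of a sub-band it is at least a linear fraction $\frac{\text{(distance into band)}}{H/2}$-ish of $\eta_2(p)$ — this is where the factor with $3H-2$ and $2H-1$ comes from, coming from the worst case where the widest chord is at an endpoint of the overlap band. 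I would compute: the overlap of the two vertical ranges has length $\geq 2H-1$, it sits at distance $\leq \tfrac12(3H-2)$... actually the right bookkeeping gives that at the least favourable height in the overlap, the protrusion of $p^2$ is at least $\frac{3H-2}{2(2H-1)} \cdot \frac{2(2H-1)}{?}$... I would just push the concavity estimate through carefully, and symmetrically for $p'^1$, then add: at a common height in the overlap band, (protrusion of $p^2$ rightward from $s$) $+$ (protrusion of $p'^1$ leftward from $s'$) must be $\leq d$ by disjointness, which yields the bound.

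\textbf{Main obstacle.} The delicate point is the concavity/linear-interpolation argument that converts $\eta_2(p)$, the \emph{global} widest chord of $p^2$, into a \emph{lower bound on the chord width at a specific height} inside the overlap band — and getting the constant exactly $\frac{3H-2}{2(2H-1)}$ rather than something weaker. This requires identifying precisely the worst-case configuration (widest chord of $p^2$ at the very top or bottom of $p$, spine endpoints positioned to minimise the overlap band) and checking the arithmetic $\frac{3H-2}{2(2H-1)}$ matches. Once that ratio is pinned down for one polygon, the disjointness step and the symmetric treatment of $p_{j+1}$ are routine.
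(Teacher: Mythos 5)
You have identified the right geometric mechanism, and it is essentially the paper's: the two spines are parallel translates, so their horizontal separation at any common height equals $x_{j+1}-x_j$; disjointness of the packing confines the rightward chord of $p_j^2$ and the leftward chord of $p_{j+1}^1$ to the strip between the two spines; the vertical ranges of the two polygons overlap in a band of length at least $2H-1$ while any height outside the other polygon's range is within $1-H$ of it; and concavity of the protrusion-width function (convex right/left boundary minus a straight spine) transports the widest chord to other heights with a controlled loss. The problem is that the decisive step --- actually deriving the constant $\frac{3H-2}{2(2H-1)}$ --- is not carried out: your sketch literally trails off in a question mark and you yourself flag this as the main obstacle. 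Since the whole content of the claim is that constant, the proposal as written is not a proof. In addition, your specific plan of finding a \emph{single} common height at which both protrusions are simultaneously large fractions of $\eta_2(p_j)$ and $\eta_1(p_{j+1})$ requires a two-sided balancing analysis (over the positions of the two maxima and of the two $y$-ranges) that is absent; it appears workable, but the clean way to tame it is precisely the reduction you do not make, namely bounding the sum by twice the maximum first.

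That reduction is how the paper closes the gap, and it removes the balancing issue entirely: since $\max\{\eta_2(p_j),\eta_1(p_{j+1})\}\geq\frac12(\eta_2(p_j)+\eta_1(p_{j+1}))$, it suffices to prove $x_{j+1}-x_j\geq\frac{3H-2}{2H-1}\max\{\eta_2(p_j),\eta_1(p_{j+1})\}$, so only one polygon is analyzed (this is exactly where the extra $2$ in the denominator comes from). Say $\eta_2(p_j)$ is the larger term and let $y$ be a height where $P(p_j^2)$ attains width $\eta_2(p_j)$. If $y$ lies in $\pi_y(P(p_{j+1}))$, the chord is blocked by $p_{j+1}$'s spine and $x_{j+1}-x_j\geq\eta_2(p_j)$ immediately. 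Otherwise $\dist(y,\pi_y(P(p_{j+1})))\leq 1-H$, and the triangle spanned by the lower spine endpoint of $p_j$ (at height $y_j^-$) and the widest chord, which lies in $P(p_j)$ by convexity, gives at the height $y_{j+1}^+$ a chord of $p_j^2$ of width at least $\frac{y_{j+1}^+-y_j^-}{y-y_j^-}\,\eta_2(p_j)\geq\left(1-\frac{1-H}{2H-1}\right)\eta_2(p_j)=\frac{3H-2}{2H-1}\,\eta_2(p_j)$, using $y-y_{j+1}^+\leq 1-H$ and $y-y_j^-\geq y_{j+1}^+-y_j^-\geq 2H-1$; at that height the spine of $p_{j+1}$ is present and blocks the chord, which yields the bound. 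One further small inaccuracy in your sketch: the protrusion function need not vanish at the endpoints of the polygon's $y$-range (the polygon may have a horizontal top or bottom edge); this is harmless since only nonnegativity there is needed for the interpolation, but it should be stated that way.
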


\begin{proof}
Fix some $j \in [N-1]$ and note that $\max\{\eta_2(p_j),\eta_1(p_{j+1})\} \geq \frac{1}{2}(\eta_2(p_j) + \eta_1(p_{j+1}))$. Hence it suffices to show that $x_{j+1}-x_j \geq \frac{3H-2}{2H-1} \max\{\eta_2(p_j),\eta_1(p_{j+1})\}$.

Assume that $\eta_2(p_j) \geq \eta_1(p_{j+1})$. The argument in the other case is analogous. Choose $y \in [0,1]$ so that $\width(P(p^2_j) \cap M_y) = \eta_2(p_j)$ and denote the right-most point in the line $P(p^2_j) \cap M_y$ by $z$. We note that since both $p_j$ and $p_{j+1}$ have height $H$, it surely holds that the distance between $y$ and $\pi_y(P(p_{j+1}))$, which we denote by $\dist(y, \pi_y(P(p_{j+1})))$, is bounded by $1-H$. If $\dist( y, \pi_y(P(p_{j+1}))) = 0$, then surely $x_{j+1}-x_j \geq \width(P(p^2_j) \cap M_y) = \eta_2(p_j)$.
\begin{figure}[h]
\centering
  \includegraphics[width=.5\linewidth]{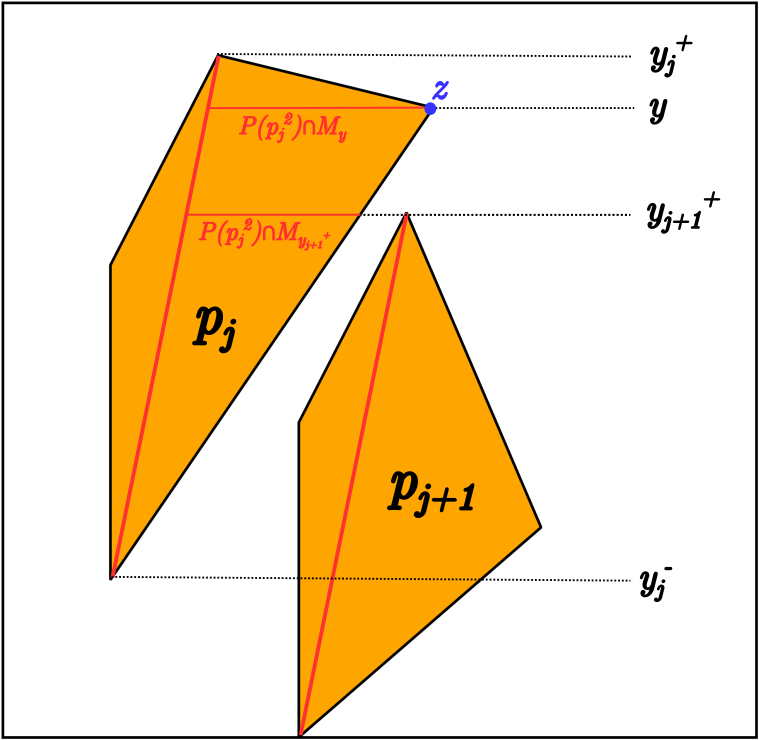}
  \caption{A schematic of the proof of $x_{j+1}-x_j \geq \frac{3H-2}{2(2H-1)}(\eta_2(p_j) + \eta_1(p_{j+1}))$.}
  \label{im_claim_polygon_eqspine}
\end{figure}

So assume that $\dist(y, \pi_y(P(p_{j+1}))) > 0$ and hence $y$ is either above or below $\pi_y(P(p_{j+1}))$. Without loss of generality, assume that it is below and therefore $y_{j+1}^+ := \max \{y' \in \pi_y(P(p_{j+1}))\} < y$ (see Figure \ref{im_claim_polygon_eqspine}). Due to convexity of $P(p_j)$, the line connecting the lower end of the spine of $P(p_j)$ (which is say at height $y_j^- \in [0,1]$) with the point $z$ lies inside of $P(p_j)$. This implies that
\begin{equation}
\label{claim_eq_1}
\width(P(p_j^2) \cap M_{y_{j+1}^+}) \geq \frac{y_{j+1}^+ - y_j^-}{y-y_j^-} \eta_2(p_j).
\end{equation}
Furthermore, since $y_j^- \leq y_{j+1}^+ < y \leq y_j^+ := \max\{y' \in \pi_y(P(p_{j}))\}$, it holds that
\begin{equation}
\label{claim_eq_2}
x_{j+1}-x_j \geq \width(P(p^2_j) \cap M_{y_{j+1}^+}).
\end{equation}
Hence, using
\[
y-y_j^- \geq y_{j+1}^+ - y_j^- = 1-\underbrace{(1-y_{j+1}^+)}_{\leq 1-H}-\underbrace{y_j^-}_{\leq 1-H} \geq 2H-1,
\]
as well as (\ref{claim_eq_1}) and (\ref{claim_eq_2}), we find that
\begin{align*}
x_{j+1}-x_j &\geq \width(P(p^2_j) \cap M_{y_{j+1}^+}) \\
&\geq \frac{y_{j+1}^+ - y_j^-}{y-y_j^-} \eta_2(p_j) \\
&= \frac{y-y_j^- -(y-y_{j+1}^+)}{y-y_j^-} \eta_2(p_j) \\
&\geq \frac{y-y_j^- -(1-H)}{y-y_j^-} \eta_2(p_j) \\
&\geq \frac{3H-2}{2H-1} \eta_2(p_j),
\end{align*}
where in the last inequality, we used that $y-y_j^- \leq y_j^+ - y_j^- \leq 1-H$. This shows the claim.
\end{proof}

From the claim it follows that
\begin{align*}
x_{\max}-x_{\min} &\geq x_N-x_1 \\
& \geq \sum_{j=1}^{N-1} (x_{j+1}-x_j) \\
& \geq \frac{3H-2}{2(2H-1)} \bigg(\sum_{j=2}^{N-1} (\eta_1(p_j) + \eta_2(p_j)) + (\eta_2(p_1) + \eta_1(p_N))\bigg) \\
& \geq \frac{3H-2}{2(2H-1)}\sum_{j=2}^{N-1} \eta(p_j) + \frac{3H-2}{2(2H-1)} (\eta_2(p_1) + \eta_1(p_N)) \\
& \geq \frac{3H-2}{2(2H-1)}\sum_{j=2}^{N-1} \eta(p_j).
\end{align*}

Now, consider an optimal multi-packing $\mathcal{P}_{\opt}=\{P_i\}_{i=1}^{\opt(I)}$ and take out the first and the last polygon (with respect to the ordering coming from $M_{1/2}$) in each bin. If there is only $1$ polygon in a bin, only take out that instead. Denote the polygons that did not get removed this way by $\overline{I} \subset I$.  We pack the removed polygons $I \backslash \overline{I}$ into $1$ bin each, which leads to a multi-packing $\overline{\mathcal{P}}=\{\overline{P}_i\}_{i=1}^{3 \opt(I)}$,where we note that some of the packings might be empty. By what we have discussed before, this packing has the property that for $i \in [\opt(I)]$, it holds that
\begin{equation}
\label{eq_upper_bounding_polygons}
x_{\max}-x_{\min} \geq \frac{3H-2}{2(2H-1)}\sum_{p \text{ is a polygon in the packing } \overline{P_i}} \eta(p)
\end{equation}
and all other packings pack at most $1$ item each. Note that $\overline{I} \subset I$ is exactly the set of polygons packed by any $\overline{P_i}$, for which $i \in [\opt(I)]$.

The idea now is to make a guess $g$ for $\lvert I \backslash \overline{I} \rvert$, remove $g$ items with high $\eta$-value and to then compute a packing for the remaining polygons into not much more than $g$ bins.

Note that surely $\lvert I \backslash \overline{I} \rvert \leq n := \lvert I \rvert$. For $g=1, \dots, n$, take out the $g$ polygons $p \in I$ with highest values $\eta(p)$ and denote the obtained instance by $I_g$.

The idea now is to partition $I_g = \bigsqcup_{k=1}^{K_g} I_{g,k}$ into as few sets as possible in such a way that for each $k \in [K_g]$, it holds that
\begin{equation}
    \label{eq_what_partition_should_do}
    x_{\max}-x_{\min} \geq 2 \sum_{p \in I_{g,k}} \eta(p),
\end{equation}
or $I_{g,k}$ contains only $1$ polygon $p$. Polygons in some $I_{g,k}$ can be packed into a single bin as follows.

Fix some $k \in \N$ and say $I_{g,k}$ consists of polygons $p_1, \dots, p_{N_{g,k}}$. If $N_{g,k}=1$, then we can clearly pack $I_{g,k}$ into a unit bin. So assume that $N_{g,k} > 1$. We construct a packing $P_{g,k}$ of $I_{g,k}$ into a strip $[0,\infty) \times [0,1]$ as follows. Place polygon $p_1$ so that it touches the left and the upper boundary of the strip. Then place polygon $p_2$ below it so that it touches the left boundary of the strip, as well as $p_1$. Further polygons $p_3, p_4, \dots$ are now being placed so that a new placed polygon always touches the left boundary of the strip and the last placed polygon. Repeat this process until there is not enough space due to the bottom of the strip anymore. We do now place a next polygon so that it touches the the last placed polygon and the bottom of the strip instead. The remaining polygons are now being placed so that a newly placed polygon always touches the bottom boundary of the strip and the last placed polygon. Note that the fact that all polygons have the same height $H$ ensures that the procedure works, and no polygon touches another one except for the one placed directly before and after it. In the following, we show that $\width(P_{g,k}) \leq 1$ and that hence $P_{g,k}$ fits into a unit bin.

To see this, write $x_1, \dots, x_{N_{g,k}}$ for the $x$-coordinates of the intersection points of the spines of $p_1, \dots, p_{N_{g,k}}$ in $P_{g,k}$ with $M_{1/2}$. First, consider $p_1$. Note that $P_{g,k}(p_1)$ necessarily intersects with $s_L$ (which we recall is the placement of the spine $s$ so that it is in the upper left corner). This is because $P_{g,k}(p_1)$ touches the left boundary, contains a copy of this spine and has its lower end at the same height as the lower end of $s_L$. Thus, if $y_1 \in [1-H,1]$ is any height at which such intersection happens, we denote by $t$ the segment at height $y_1$ that is parallel to the $x$-axis, starts on $s_L$ and ends at the spine of the placement of $p_1$. We find that
\[
x_1 - x_{\min}(I) = \width(t) \leq \eta_1(p_1) \leq \eta(p_1).
\]

Now let $j \in [N_{g,k}-1]$ and consider $p_j$ and $p_{j+1}$. Then they touch somewhere in $P_{g,k}$, say at height $y_j \in [0,1]$. Therefore
\begin{align*}
x_{j+1} - x_j &\leq \width((p^2_j \cup p^1_{j+1}) \cap M_{y_j}) \leq \eta_2(p_j) + \eta_1(p_{j+1}) \leq \eta(p_j) + \eta(p_{j+1}).
\end{align*}

Writing $x_0 := x_{\min}$ and using (\ref{eq_what_partition_should_do}), we hence find that
\[
x_{N_{g,k}} = \sum_{j=0}^{N_{g,k}-1} (x_{j+1} - x_j) \leq 2 \sum_{p \in I_{g,k}} \eta(p) + x_{\min}(I) - \eta(p_{N_{g,k}})  \leq x_{\max}(I) - \eta(p_{N_{g,k}}).
\]
We finally consider $p_{N_{g,k}}$. Note that since $p_{N_{g,k}}$ is placed on the bottom of the strip, its spine in $P_{g,k}$ can at most extend horizontally to the right until $x_{N_{g,k}} + (1-\frac{1}{2H})\width{s}$ (that is, the upper end of the spine of $p_{N_{g,k}}$ has such $x$-coordinate). But this is exactly equal to $x_{N_{g,k}} + (1-x_{\max}(I))$. In particular,
\[
\width(P_{g,k}) \leq x_{N_{g,k}} + (1-x_{\max}(I)) + \eta_2(p_{N_{g,k}}) \leq 1 + x_{N_{g,k}} -(x_{\max}(I) - \eta(p_{N_{g,k}})) \leq 1,
\]
which shows that $I_{g,k}$ can be packed by $P_{g,k}$ into a unit bin.

So it suffices to find a partition $I_g = \bigsqcup_{k=1}^{K_g} I_{g,k}$ satisfying (\ref{eq_what_partition_should_do}) for $g \in [2n]$ and show that there is $g$ so that $K_g + g$ is not much bigger than $\opt(I)$. 

We first show the following claim.

\begin{claim}
Let $I_{BP}$ be an instance for $(1D)$-\textsc{Bin Packing} and let $\alpha \geq 1$. Define $\overline{I_{BP}} := \{\overline{s}\}_{s \in I_{BP}}$, where for $s \in I_{BP}$,  we set $\overline{s}:=\min\{\alpha s,1\}$. Then
\[
\opt_{BP}(\overline{I_{BP}}) \leq (2 \alpha+1) \opt_{BP}(I_{BP}),
\]
where $opt_{BP}$ is the function that assigns to an instance of \textsc{Bin Packing} its optimal solution.
\end{claim}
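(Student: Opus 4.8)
The plan is to start from an optimal packing of $I_{BP}$ into $\opt_{BP}(I_{BP})$ bins and to repack it bin by bin: for each original bin I will show that the corresponding scaled items can be packed into at most $2\alpha+1$ new bins, and the overall packing of $\overline{I_{BP}}$ is then the union of all these partial repackings. Since the number of new bins used for a single original bin is an integer not exceeding $2\alpha+1$, summing over the $\opt_{BP}(I_{BP})$ original bins produces a packing of $\overline{I_{BP}}$ into at most $(2\alpha+1)\opt_{BP}(I_{BP})$ bins, which is the claimed inequality; in particular it causes no trouble that $2\alpha+1$ need not be an integer. I would also note up front that since $s\in(0,1]$ and $\alpha\ge 1$, every scaled item $\overline{s}=\min\{\alpha s,1\}$ lies in $(0,1]$, so $\overline{I_{BP}}$ is a legitimate \textsc{Bin Packing} instance and each scaled item fits into a bin on its own.

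To carry out the per-bin step, fix a bin $B$ of the optimal packing of $I_{BP}$, say containing items $s_1,\dots,s_k$ with $\sum_{i=1}^k s_i\le 1$. Using $\overline{s}_i=\min\{\alpha s_i,1\}\le \alpha s_i$, the associated scaled items have total size $\sum_{i=1}^k \overline{s}_i\le \alpha\sum_{i=1}^k s_i\le \alpha$, and each of them is at most $1$. Hence the whole claim reduces to the elementary statement that any finite multiset of numbers in $(0,1]$ whose total is at most $\alpha$ can be packed into at most $2\alpha+1$ unit bins.

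For this last statement I would run \textsc{NextFit} on the multiset and let $B_1,\dots,B_t$ be the opened bins in order of opening, with loads $\ell_1,\dots,\ell_t$. For each $i\in[t-1]$, the first item placed into $B_{i+1}$ did not fit into $B_i$, so $\ell_i+\ell_{i+1}>1$; summing these inequalities over odd indices $i$ yields $\sum_{i=1}^t\ell_i>\frac{t-1}{2}$, whence $\alpha\ge\sum_{i=1}^t\ell_i>\frac{t-1}{2}$ and therefore $t<2\alpha+1$. This gives the elementary statement, and with it the claim. I do not anticipate a genuine obstacle here; the only points that need a little care are the capping at $1$ — which is precisely what makes every scaled item packable and keeps \textsc{NextFit} well-defined — and the integrality bookkeeping in the final summation mentioned above.
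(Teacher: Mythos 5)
Your proposal is correct and follows essentially the same route as the paper: decompose an optimal packing of $I_{BP}$ bin by bin, observe that the scaled items from one bin have total size at most $\alpha$ (each at most $1$), and repack them with a NextFit-type algorithm into at most $2\alpha+1$ bins, summing over the $\opt_{BP}(I_{BP})$ original bins. The only difference is cosmetic: the paper invokes the known NextFit(Decreasing) bound of $1+2\sum_i s_i$ bins, whereas you reprove it inline via the adjacent-bin-load argument.
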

\begin{proof}
Consider the partition $I_{BP}=\bigsqcup_{j=1}^l A_j$ corresponding to an optimal solution. Then $\sum_{s \in A_j} s \leq 1$ and hence, writing $\overline{A_j} = \{\overline{s}\}_{s \in A_j}$, it holds that $\sum_{\overline{s} \in \overline{A_j}} \overline{s} \leq \alpha$. With \textsc{NextFitDecreasing}, we can hence pack $\overline{A_j}$ into at most $2 \alpha + 1$ bins . Doing this for all $j \in [l]$, we obtain a feasible solution of \textsc{Bin Packing} for the instance $\overline{I_{BP}}$ into at most $(2\alpha + 1) l = (2 \alpha + 1) \opt_{BP}(I_{BP})$ bins.
\end{proof}

For $p \in I_g$, we write $\rho(p):=\min \{2\eta(p),x_{\max}-x_{\min}\}$. Use FFDH to find a $\frac{3}{2}$-approximation to \textsc{Bin Packing} efficiently (see~\cite{SimchiLevi1994NewWR}) for the instance $\{\rho(p)\}_{p \in I_g}$ but for bins of size $x_{\max}-x_{\min}$. Denote the number of bins this solution needs by $K_g$ and partition $I_g$ according to this solution into sets $I_g = \bigsqcup_{k=1}^{K_g} I_{g,k}$. Note that the partition satisfies exactly the property (\ref{eq_what_partition_should_do}).

Denote the optimal solution of an instance $J$ for \textsc{Bin Packing} for bins of size $x_{\max}-x_{\min}$ by $\opt_{BP^s}(J)$. We now consider how big the value of $K_g$ is in the case when $g=\lvert I \backslash \overline{I} \rvert$. In this case, using the last claim for $\alpha := 2/\big(\frac{3H-2}{2(2H-1)}\big) = \frac{4(2H-1)}{3H-2}$ and recalling that $I_g$ is obtained from $I$ after taking out the $g$ polygons $p$ with biggest values of $\eta(p)$, we find that
\begin{align*}
K_{\lvert I \backslash \overline{I} \rvert} & \leq \frac{3}{2}\opt_{BP^s}(\{\rho(p)\}_{p \in I_{\lvert I \backslash \overline{I} \rvert}}) \\
&\leq \frac{3}{2}\opt_{BP^s}(\{\rho(p))\}_{p \in \overline{I}}) \\ 
&\leq \frac{3}{2} \bigg(\frac{8(2H-1)}{3H-2}+1 \bigg) \opt_{BP^s} \bigg( \Big\{\frac{3H-2}{2(2H-1)} \eta(p) \Big\}_{p \in \overline{I}}\bigg) \\
&\leq \frac{3}{2} \bigg(\frac{8(2H-1)}{3H-2}+1 \bigg) \opt(I) \\
&= \bigg(\frac{24H-12}{3H-2} + \frac{3}{2}\bigg) \opt(I),
\end{align*}
where the last inequality follows from (\ref{eq_upper_bounding_polygons}).

So take the value of $g \in [n]$ for which $g+K_g$ is minimized. Then, recalling that $\lvert I \backslash \overline{I} \rvert \leq 2 \opt(I)$, it holds that
\[
g+K_g \leq \lvert I \backslash \overline{I} \rvert + K_{\lvert I \backslash \overline{I} \rvert} \leq \bigg(\frac{7}{2} + \frac{24H-12}{3H-2}\bigg) \opt(I) \leq 27.5 \opt(I),
\]
using that $H \geq \frac{3}{4}$. This concludes the proof.
\end{proof}

\begin{remark}
A next goal could be to show the existence of an efficient $\mathcal{O}(1)$-approximation for instances $I \in \polygons$ for which there is an angle $\alpha \in [0, \pi]$ so that each polygon $p \in I$ has a spine $s_p$ with angle $\alpha$.
\end{remark}

We conclude the section with the following corollary.

\begin{corollary}
Let $C \in \N$. There is an efficient $\mathcal{O}(C)$-approximation algorithm for \textsc{Polygon Bin Packing} for instances $I \subset \polygons$ satisfying the following. There is a set $S$ of lines in $[0,1]^2$, so that $\lvert S \rvert \leq C$ and with the property that each polygon $p \in I$ has a spine $s$ in the set $S$ (up to translation).
\end{corollary}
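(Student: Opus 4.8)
The plan is to partition $I$ according to which segment of $S$ realizes the spine of each polygon and then observe that Theorems~\ref{thm_equal_spine} and~\ref{polygon_bin_packing_1_minus_delta} already cover every resulting piece. For each $p \in I$ fix one spine $s_p$ of $p$ that coincides, up to translation, with a segment of $S$; this partitions $I$ into at most $C$ classes, indexed by the segments of $S$ that actually occur. Since a spine joins a lowest point of the polygon to a highest one, we always have $\height(s_p) = \height(p)$, so within each class all polygons have the same height, namely the height of the common segment. We may assume $I \neq \emptyset$ (otherwise $\opt(I)=0$ and nothing is to be done), so that $\opt(J) \geq 1$ for every nonempty sub-instance $J \subset I$.

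Next I would split the classes into ``tall'' ones, whose common segment $s$ has $\height(s) \geq \frac{3}{4}$, and ``short'' ones. A tall class $I_s$ is exactly an instance of the type handled by Theorem~\ref{thm_equal_spine} (all its polygons share the single spine $s$, with the same tilt, of height at least $\frac{3}{4}$), so it can be packed efficiently into $\mathcal{O}(1)\opt(I_s) \leq \mathcal{O}(1)\opt(I)$ bins, using $\opt(I_s) \leq \opt(I)$ since $I_s \subset I$. For the remaining polygons — those whose chosen spine has height strictly less than $\frac{3}{4}$ — every such $p$ satisfies $\height(p) = \height(s_p) < \frac{3}{4} = 1-\frac{1}{4}$; collecting all of them into one instance $I_{\mathrm{short}}$ and applying Theorem~\ref{polygon_bin_packing_1_minus_delta} with $\delta = \frac{1}{4}$ packs $I_{\mathrm{short}}$ efficiently into $\mathcal{O}(1)\opt(I_{\mathrm{short}}) \leq \mathcal{O}(1)\opt(I)$ bins. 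The key point is that the short classes are \emph{not} handled one at a time, so they contribute no factor of $C$.

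Summing up, the at most $C$ tall classes use $\mathcal{O}(1)\opt(I)$ bins each, and $I_{\mathrm{short}}$ uses $\mathcal{O}(1)\opt(I)$ more, for a total of $\mathcal{O}(C)\opt(I)$ bins; the whole procedure is a composition of constantly many polynomial-time algorithms and hence efficient. The only step needing any thought — and therefore the main obstacle — is the treatment of short spines: Theorem~\ref{thm_equal_spine} genuinely degrades as $\height(s) \to \frac{2}{3}$ and says nothing below it (see the remark after it), so those polygons must instead be routed through the bounded-height result of Theorem~\ref{polygon_bin_packing_1_minus_delta}, which is legitimate precisely because $\height(s_p) = \height(p)$. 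Everything else is routine bookkeeping of multiplicative terms.
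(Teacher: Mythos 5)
Your proposal is correct and follows essentially the same route as the paper: separate out the polygons of height below $\tfrac{3}{4}$ and pack them in one go via Theorem~\ref{polygon_bin_packing_1_minus_delta} with $\delta=\tfrac{1}{4}$, then partition the remaining tall polygons by their common spine from $S$ and apply Theorem~\ref{thm_equal_spine} to each of the at most $C$ classes. The only (immaterial) difference is the order of the two splits, and your bookkeeping of the $\mathcal{O}(1)$ and $\mathcal{O}(C)$ contributions matches the paper's.
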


\begin{proof}
First, we note that all polygons with height smaller than $\frac{3}{4}$ can easily be packed using Theorem \ref{polygon_bin_packing_1_minus_delta} for $\delta = \frac{1}{4}$. This needs at most $\mathcal{O}(1) \opt(I)$ bins. We call the remaining set of polygons $\overline{I}$. Partition $\overline{I}:=\bigsqcup_{s \in S} I_s$ so that for all $s \in S$, all $p \in I_s$ have a spine that (up to a translation) is equal to $s$. Packing each $I_s$, $s \in S$, separately with the algorithm from Theorem \ref{thm_equal_spine}, we need at most $C\mathcal{O}(1)\opt(\overline{I}) \leq \mathcal{O}(C)\opt(I)$ bins.
\end{proof}

\bibliography{ref_MT.bib}

\end{document}